\newtheorem{lemma}{Lemma}
\newtheorem{theorem}{Theorem}
\newtheorem{example}{Example}
\newtheorem{note*}{Note}
\begin{document}
%
\title{On Cache-Aided Multi-User Private Information Retrieval with Small Caches}
%
%
%

\author{Charul~Rajput~
      and~B. Sundar~Rajan \\ \small{Department of Electrical Communication Engineering, Indian Institute of Science Bengaluru, India \\
E-mail: charulrajput@iisc.ac.in, bsrajan@iisc.ac.in}}
\maketitle

\begin{abstract}
In this paper, we propose a scheme for the problem of cache-aided multi-user private information retrieval with small caches, in which $K$ users are connected to $S$ non-colluding databases via shared links. Each database contains a set of $N$ files, and each user has a dedicated cache of size equivalent to the size of $M$ files. All the users want to retrieve a file without revealing their demands to the databases. During off-peak hours, all the users will fill their caches, and when required, users will demand their desired files by cooperatively generating query sets for each database. After receiving the transmissions from databases, all the users should get their desired files using transmitted data and their cache contents. This problem has been studied in [X. Zhang, K. Wan, H. Sun, M. Ji and G. Caire, \tqt{Fundamental limits of cache-aided multiuser private information retrieval}, IEEE Trans. Commun., 2021], in which authors proposed a product design scheme.
In this paper, we propose a scheme that gives a better rate for a particular value of $M$ than the product design scheme. We consider a slightly different approach for the placement phase. Instead of a database filling the caches of all users directly, a database will broadcast cache content for all users on a shared link, and then the users will decide unitedly which part of the broadcasted content will be stored in the cache of each user. This variation facilitates maintaining the privacy constraint at a reduced rate.
\end{abstract}

\begin{IEEEkeywords}
Private information retrieval, coded caching, dedicated caches.
\end{IEEEkeywords}

%
\IEEEpeerreviewmaketitle

\section{Introduction}
%
%
%
%

In a problem of private information retrieval (PIR), a user wants to retrieve a file from distributed databases containing the same set of files in such a way that the identity of the desired file is private. This problem was first introduced by Chor et al. \cite{CKGS1998}, and since then it has been studied extensively in literature \cite{SJ2017A, SJ2017B, T2017, ZWSJC2021}. The trivial solution for the PIR problem is to send all the files to the user to keep the demand identity private, but clearly, it is not a feasible solution as the number of files can be large. Hence the requirement is to find a PIR scheme that minimizes the communication overhead that occurred due to the privacy constraint. 

There are many variations of the PIR problem in the literature, such as multi-message PIR (MPIR), in which the user can demand more than one message \cite{BU2018}, PIR with side information (PIR-SI), in which the user has prior side information about the database \cite{KGHES2019},
pliable PIR (PPIR) in which the user is interested in any message from a desired subset of the set of all messages stored in databases \cite{OK2022}. One variant of PIR is studied with coded caching \cite{KYL2017}, in which the user is equipped with a cache, and during off-peak hours a database place some content in the cache of the user. When the user wants to retrieve a file, then the transmissions will be made in such a way that the user can get the desired file using transmitted data and cache content while the privacy of the demand is maintained. After that, the problem cache-aided PIR has been studied for multiple users equipped with dedicated caches and is called cache-aided multi-user PIR (MuPIR) \cite{ZWSJC2021}. Recently, in \cite{VR2022}, authors have proposed a scheme for multi-access cache-aided multi-user PIR in which each user has access to more than one helper cache.

In this paper, we consider the problem of cache-aided multi-user PIR and propose a scheme with a slight variation in the placement phase. All the users are connected to databases via shared links. In the placement phase, a database will broadcast the content for the caches of all users (precisely defined in Section \ref{sec3}), and users will collude and decide which part of the broadcasted content will be stored in the cache of each user. That means the database will know the content placed in the caches of all users but will be unaware of which part of the whole content is stored in a user's cache. The cache contents of any two users are disjoint. This variation will help in maintaining the privacy constraint at a reduced rate.

\textit{Notations:} For an integer $N,$ $[N]$ denotes the set $\{1,2, \ldots, N\},$ and $A_{1:N}$ denotes the set $\{A_1, A_2, \ldots, A_N\}.$

\subsection{Cache-aided multi-user private information retrieval \cite{ZWSJC2021}}
In cache-aided MuPIR, there are $S$ non-colluding databases, and each database stores $N$ files $W_1, W_2, \ldots, W_N$, each of size $L$ bits. There are $K$ users connected to the databases via error-free links. Every user has a dedicated cache, which has the capability of storing $ML$ bits. Let $Z_k$ denotes the cache content of user $k$, $k \in [K]$. All caches are filled in the placement phase before knowing the demands of the users. In the delivery phase, all the users cooperatively generate query set $Q_s$ for each database $s \in [S]$. Using the answers $A_s, s \in [S]$ to the queries and the cache contents, users satisfy their demands without revealing any knowledge of the demand vector to the databases. Let $\theta = (d_1, d_2, \ldots, d_K)$ be the demand vector, where $d_i \in [N]$, for all $i \in [K]$. After receiving all the answers, the users should be able to recover their desired file correctly, i.e.,
$$H(W_{d_k} \ | \ Q_{1:S}, A_{1:S}, Z_k) = 0, \quad \forall k\in [K].$$
To preserve the privacy of demands with respect to the databases, the following condition should be satisfied
$$I(\theta; Q_s, A_s, W_{1:N}, Z_{1:K}) =0, \quad \forall s \in [S].$$
Let $D$ denotes the total number of bits transmitted by the databases to satisfy the demands of all the users. Then the rate (load) $R$ is defined as
$$R=\frac{D}{L}.$$

%

\begin{figure}[!t]
\centering
\includegraphics[width=3.5in]{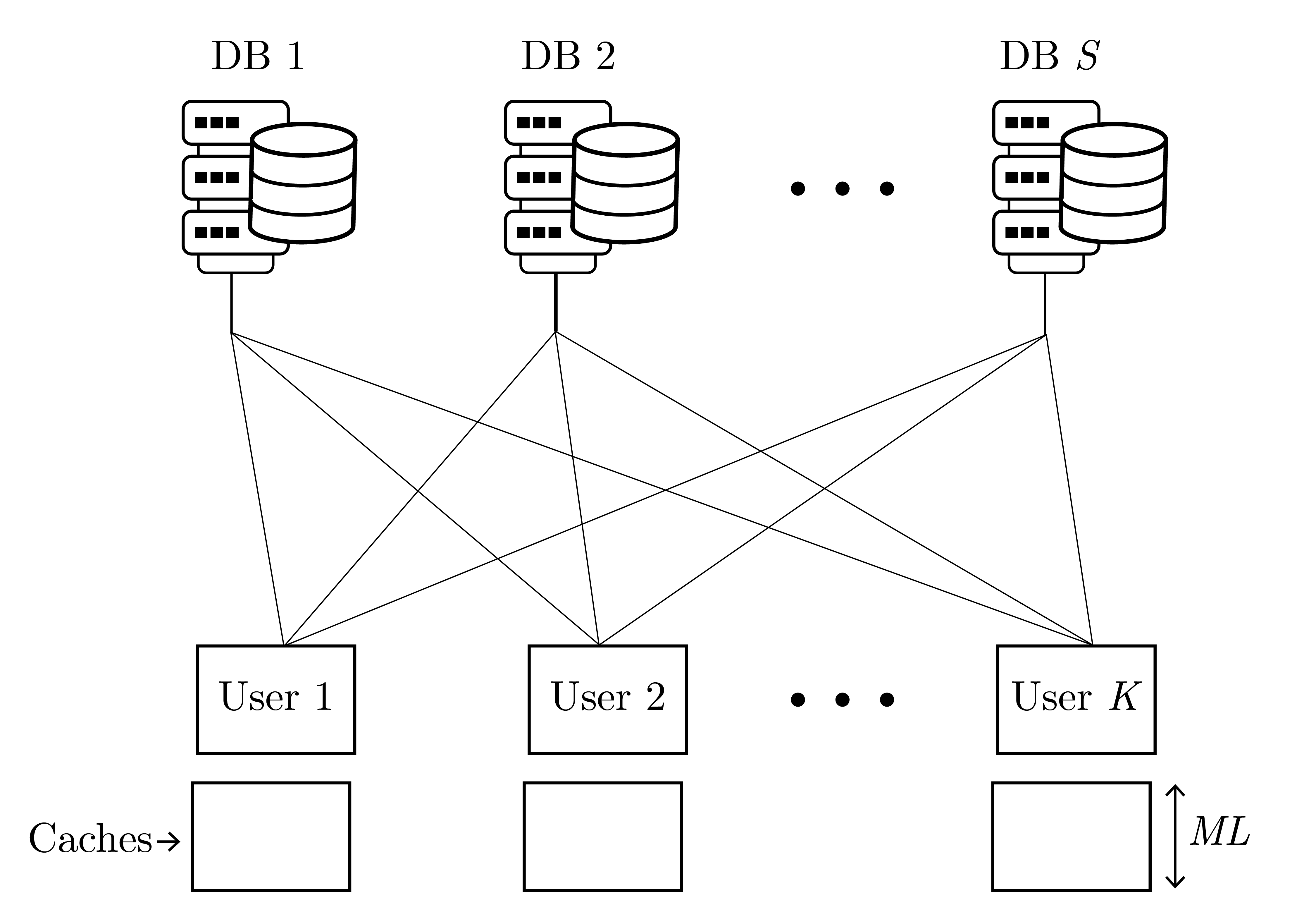}
\caption{Cache-aided multi-user private information retrieval}
\label{fig_1}
\end{figure}

\subsection{Our contributions }
The main contributions and outline of this paper are as follows.

\begin{itemize}
\item In Section \ref{sec2}, we provide an alternate scheme for PIR with subpacketization level $S^{N-1}$ and optimal rate $R=1+\frac{1}{S}+\frac{1}{S^2}+\cdots+\frac{1}{S^{N-1}}$. A scheme was given by Sun and Jafer \cite{SJ2017B} with the same subpacketization level and rate. The purpose of giving this alternate scheme is to provide a background for the scheme given in Section \ref{sec3} for cache-aided MuPIR, as the method of generating query sets in that scheme is inspired by this PIR scheme.

\item Using the PIR scheme given in Section \ref{sec2}, we present an idea to improve the product design (PD) scheme in terms of subpacketization level in Section \ref{sec3} and give an example to illustrate it.

\item In Subsection \ref{subsec3.1}, we propose a scheme for cache-aided MuPIR for a small value of cache size, which is inspired by the CFL (Chen, Fan, Letaief) coded caching scheme \cite{CFL2016} and the PIR scheme given in Section \ref{sec2}.

\item The subpacketization level in the proposed scheme is $KS^{N-1}$ which is less than the subpacketization level of the PD scheme which is ${K \choose t} S^N$, where $t=KM/N$.

\item In Section \ref{sec4}, we compare the rate of the proposed scheme $R$ and the rate of the PD scheme $R_{PD}$ for the value of $M$ given in \eqref{M}, and show that $R_{PD}-R >0$ for all $S \geq 2$ and $N \geq 2$. In Figure \ref{figc1}, the comparison of the rate of the PD scheme and proposed scheme (given only for one memory point) is given. 

\begin{figure}
\caption{Rate of the proposed scheme compared to the rate of the PD scheme\label{figc1}}
\centering
\includegraphics[scale=0.4]{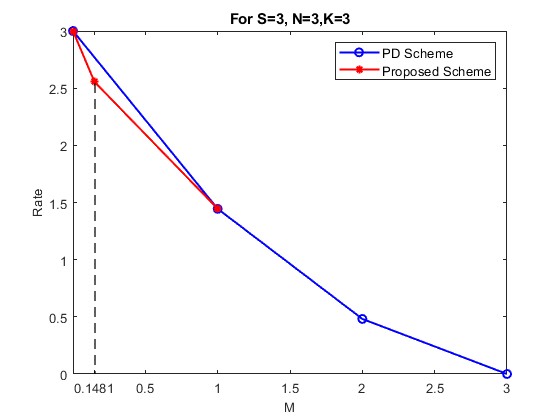}
\includegraphics[scale=0.4]{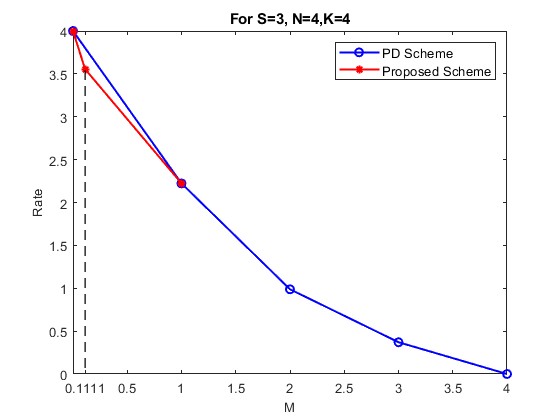}
\end{figure}


\end{itemize}

\section{PIR scheme with minimum subpacketization} \label{sec2}
Consider $S$ non-colluding databases containing $N$ files $W_1, W_2, \ldots, W_N$ each of size $L$ bits. The user is connected to the databases via an error-free link and wants a file $W_{\theta}$, for some $\theta \in [N]$. The user will generate query sets for each database in such a way that the individual database will not have any idea about the user's demand.
This problem was solved by Sun and Jafer \cite{SJ2017A} with subpacketization level $S^N$ and optimal rate $R=1+\frac{1}{S}+\frac{1}{S^2}+\cdots+\frac{1}{S^{N-1}}$. After that subpacketization level has been reduced to $S^{N-1}$ in \cite{SJ2017B} with the same rate. In this section, we provide an alternate method (Algorithm \ref{algo1} given in this section) to generate the query sets for the databases with the latter case, i.e., with the subpacketization level $S^{N-1}$. 
In this scheme, the query set of one database is different, and for the rest, it is the same.
We start with an example before describing the general scheme.
\begin{example}\label{ex2.1}
Consider a PIR problem with $S=4$ databases and $N=3$ files denoted by $A,B$ and $C$. Divide each file into $S^{N-1}=16$ subfiles. The files $A,B$ and $C$ are divided into the subfiles $\{A_1 A_2, \ldots, A_{16}\}$, $\{ B_1,B_2, \ldots, B_{16} \}$ and $\{C_1,C_2, \ldots, C_{16}\}$, respectively. Now the user generates three random permutations of $\{ 1, 2, \ldots, 16 \}$, namely, $P_A, P_B$ and $P_C$ for the subfiles of files $A, B$ and $C$, respectively. These permutations are not known to the databases. Let image of $A_i$ be represented by $a_i$ under the permutation $P_A$, i.e., $a_i=P_A(A_i)$ for all $i \in \{ 1, 2, \ldots, 16 \}$. Similarly, $b_j=P_B(B_j)$ and $c_k=P_C(C_k)$ for all $j, k \in \{ 1, 2, \ldots, 16 \}$. If the demand of the user is $A$ (without loss of generality), then the generated query sets are given in Table \ref{tb1}.

\begin{table}[!t]
\caption{Query table for Example \ref{ex2.1}\label{tb1}}
\centering
\begin{tabular}{ |c|c|c|c| } 
 \hline
 DB1 & DB2 & DB3 & DB4 \\
\hline 
$a_1$ & $a_2+b_1$ & $a_4+b_1$ & $a_6+b_1$ \\ 
$b_1$ & $a_3+c_1$ & $a_5+c_1$ & $a_7+c_1$ \\ 
$c_1$ & $b_2+c_2$ & $b_3+c_3$ & $b_4+c_4$ \\
$a_8+b_2+c_2$  & $a_{11}+b_3+c_3$ & $a_{13}+b_2+c_2$ & $a_{15}+b_2+c_2$ \\
$a_9+b_3+c_3$ & $a_{12} +b_4+c_4$ & $a_{14}+b_4+c_4$ & $a_{16}+b_3+c_3$ \\
$a_{10}+b_4+c_4$ & & & \\
 \hline
\end{tabular}
\end{table}

After receiving the answers to all queries from each database, the user will get all the subfiles of file $A$. Since $P_A, P_B$ and $P_C$ are all random permutations of $\{ 1, 2, \ldots, 16 \}$ and the query set of the individual database is symmetric along all the files, databases will not have any information regarding the demand of the user.
\end{example}

\subsection{PIR scheme} \label{subsec2.1}

Divide each file $W_i$ into $S^{N-1}$ subfiles denoted by $W_i(1), W_i(2), \ldots, W_i(S^{N-1})$, for all $i\in [N]$. Now we describe some terminologies which are used in this scheme.
\begin{itemize}

\item $k$-sum: The terminology $k$-sum is used to denote an expression representing the sum of $k$ distinct variables. For example $W_{i_1}\oplus W_{i_2} \oplus \cdots \oplus W_{i_k}$ is a $k$-sum which corresponds to the set $U=\{i_1, i_2, \ldots, i_k\} \subseteq [N]$.

\item Define a functiom $\phi_i(S,k)$ for all $i \in [S]$ and $k \in [N]$, as
\begin{equation} \label{eq1}
\phi_i(S,k)= \begin{cases}
g(S,k), \quad \text{for}\ i=1, \\
f(S,k), \quad \text{for} \ 2 \leq i \leq S
\end{cases},
\end{equation}
where $g(S,k)=\frac{S-1}{S} \left[ (-1)^{k-1} +(S-1)^{k-2} \right]$ and $f(S,k)=\frac{1}{S} \left[ (-1)^{k}+ (S-1)^{k-1} \right]$.

The function $\phi_i(S,k)$ will represent the number of repetitions of a $k$-sum in the query set of $i$-th database in this scheme. The structure of query set of one database is different from the other databases (without loss of generality, we have taken that the query structure of database $1$ is different from the other databases). Therefore, the value of the function $\phi_i(S,k)$ is different for $i=1$ and same for all $i \in \{2, 3, \ldots, S\}$.

\end{itemize}

\begin{algorithm} 
\caption{Generates query sets for each database $s \in S$ for PIR scheme [\ref{subsec2.1}].\label{algo1}}
\SetAlgoLined
\KwData{ Permutations $a^i$ for all $i \in [N]$, demand $d$}

\KwResult {Query sets $Q_{j},\  \forall j \in [S]$}

 $Q_1 \gets \{\{W_{i}(a^i_1) \} \ | \ i \in [N] \}$\; 

$Q_j \gets \emptyset, \  \forall j=2,3, \ldots, S$\;

$k\gets 2$\; $t_i \gets 1, \ \forall i \in [N]$\;

\While {$t_d < S^{N-1}$}
{
	\For {$j = 1 \to S$}
	{
		$I \gets 0$\;
		\For {$(i=1 \to S ) \ \& \ (i \neq j)$}
		{
			\For {$(Q \in Q_i) \ \& \ (|Q|=k-1)$}
			{
				\If {$Q \cap \{W_d(x) \ | \ x \in [S^{N-1}]\} = \emptyset$}
				{
					\begin{equation}\label{add1}
						\hspace{-4mm} Q_j \gets Q_j \cup \{Q \cup W_d(a^d_{++t_d})\};
					\end{equation}
					$I \gets 1$;
				}
			}
		}
		\If {$I=1$}
		{
			\For {all distinct $\{ i_1,i_2, \ldots , i_k \} \in [N]\backslash\{d\}$ }
			{
				\For {$t \in [\phi_j(S,k)]$}
				{
					\begin{equation}\label{add2}
					Q_j \gets Q_j \cup \{ \{ W_{i_1}(a^{i_1}_{++t_{i_1}}),  W_{i_2}(a^{i_2}_{++t_{i_2}}), \ldots,   W_{i_k}(a^{i_k}_{++t_{i_k}})\} \};
					\end{equation}
				}
			}
		}
	}
	$k \gets k+1$\;
}
\textbf{Note}: The notation $W_{i,j}(x)$ denotes the subsubfile $W_{i,j}^x$.

\end{algorithm} 

Suppose the user's demand is $W_d$, where $d \in [N]$. Let $\bm{Q}_j$ denote the query set for the database $j, j \in [S]$. For $N$ random permutations $a^i=(a^i_1, a^i_2, \ldots, a^i_{S^{N-1}})$ of set $[S^{N-1}]$ generated by the user, where $i \in [N]$ and demand $d$, run Algorithm \ref{algo1}. As an output of Algorithm \ref{algo1}, we will get $S$ sets $Q_j, j \in [S]$, and every set $Q_j$ corresponds to the query set $\bm{Q}_j$ of database $j$ in the following manner:
$$\bm{Q}_j = \left\{ \sum_{x \in Q} x \ \big| \ \forall Q \in Q_j \right\},$$
for all $j \in [S]$. For $j \in [S]$, adding a set $\{ W_{i_1}, W_{i_2}, \ldots, W_{i_k} \}$ in $Q_j$ is equivelent to adding a $k$-sum $W_{i_1} \oplus W_{i_2} \oplus  \cdots \oplus W_{i_k}$ in the query set $\bm{Q}_j$. In further discussion, we will also address $Q_j$ as query set and each set of size $k$ in $Q_j$ as $k$-sum.
The idea of this algorithm is mainly based on two steps: \\
\textbf{Step 1:} Use the extra information gained from one database in the query sets of other databases. \\
\textbf{Step 2:} After asking the queries in the form of $k$-sum, $k \in [N]$, which contains the desired subfiles, maintain the symmetry along all the files by adding the remaining possible $k$-sums. \\
The first query from database $1$ demands the first subfile (after applying the permutation) of each file $W_i, i \in [N]$. Since the demand of the user is $W_d$, all other subfiles $W_i(a^i_{1})$, where $i \in [N]$ and $i \neq d$ are the extra information. Now use this extra information for getting the next desired subfiles of $W_d$ from all other databases, i.e., the following queries will be added to the query set $\bm{Q}_j$ of database $j,  1 < j \leq S$,
$$
\{ W_d(a^d_{x})+ W_{i}(a^{i}_{1}) \ | \ x = 2+(j-2)(N-1), \ldots, 1+(j-1)(N-1); i \in [N], i \neq d \}. \nonumber
$$
Now the query set $Q_j$ contains only $(N-1)$ number of $2$-sums, all containing the subfiles of $W_d$, while the number of the total possible $2$-sums is $N \choose 2$. Therefore, to maintain symmetry across the files, add ${N\choose2}-(N-1)$ remaining $2$-sums using new subfiles of the files $W_i, i \neq d$ in the query set $Q_j$. Now the query sets of every database except the first database contain the extra information. Further, exploit this information gained from a database in the form of $2$-sums for getting subfiles of $W_d$ using $3$-sums in the query sets of other databases. Then again, maintain symmetry along all the files by adding the remaining possible $3$-sums. Also, the repetition of every $k$-sum should be equal while performing Step 2. Repeat this process until we find all the subfiles of $W_d$.

The total number of $k$-sums in $N$ files is $N\choose k$. Suppose every $k$ sum is repeated $\phi_j(S,k)$ times in the query set $\bm{Q}_j$. Since the structure of the query set of the first database is different from the other databases, the value of the function $\phi_i(S,k)$ is different for $i=1$ and the same for all $i \in \{2, 3, \ldots, S\}$, as defined in \eqref{eq1}. Now we explain the process of obtaining the values of the functions $g(S,k)$ and $f(S,k)$.

\begin{table}[!t]
\caption{The values of functions $g(S, k)$ and $f(S, k)$ in the given PIR scheme\label{tb2}}
\centering
\begin{tabular}{ |c||c|c| }
 \hline
$k$ & $g(S,k)$ & $f(S,k)$ \\
\hline
$1$ & $1$ & $0$ \\
$2$ & $0$ & $1$ \\
$3$ & $(S-1)f(S,2)$ & $(S-2)f(S,2)+g(S,2)$ \\
$4$ & $(S-1)f(S,3)$ & $(S-2)f(S,3)+g(S,3)$ \\
$\vdots$ & $\vdots$ & $\vdots$ \\
$t$& $(S-1)f(S,t-1)$ & $(S-2)f(S,t-1)+g(S,t-1)$ \\
$\vdots$ & $\vdots$ & $\vdots$ \\
$N$ & $(S-1)f(S,N-1)$ & $(S-2)f(S,N-1)+g(S,N-1)$ \\
\hline
\end{tabular}
\end{table}

\begin{itemize}
\item Since we are asking single subfiles ($1$-sum) only from database $1$, we have $g(S,1)=1$ and $f(S,1)=0$ .
\item The extra information gained from database $1$ in the first step can only be used in the query sets of other databases in the form of $2$-sums, and every $2$-sum will occur only once in a query set $\bm{Q}_j, j \in \{ 2,3, \ldots, S \}$. Therefore, $g(S,2)=0$ and $f(S,2)=1$.
\item In query set of database $1$, while using the extra information (in the form of $2$-sums) gained from databases $2,3, \ldots, S$, every $3$-sum is repeated $(S-1)$ times. In the query set of database $j, j \in \{2,3, \ldots, S\}$, every $3$-sum is repeated $(S-2)$ times.
\item Continuing in the same way, for $k \in [N]$, we have $g(S,k)=(S-1)f(S,k-1)$ because in the query set of database $1$ we will use extra information gained from databases $2,3, \ldots, S$ using $(k-1)$-sums which were already occurring $f(S,k-1)$ times. Similarly, we get, $f(S,k)=(S-2)f(S,k-1)+g(S,k-1)$.
\end{itemize}
By following the process given above, the values of functions $g(S,k)$ and $f(S,k)$ are given in Table \ref{tb2}.
By solving the recurrence relations $g(S,k)=(S-1)f(S,k-1)$ and $f(S,k)=(S-2)f(S,k-1)+g(S,k-1)$ with initial conditions $g(S,1)=1$, $f(S,1)=0$, $g(S,2)=0$ and $f(S,2)=1$, we get
$$g(S,k)=\frac{S-1}{S} \left[ (-1)^{k-1} +(S-1)^{k-2} \right],$$
$$f(S,k)=\frac{1}{S} \left[ (-1)^{k}+ (S-1)^{k-1} \right].$$

This rate of this scheme is $R=\left(1+\frac{1}{S}+\frac{1}{S^2}+ \cdots + \frac{1}{S^{N-1}} \right) ,$ which is optimal. The proof of correctness, privacy and achieving capacity for this scheme is given in Appendix \ref{asec1}.

\section{Cache-aided multi-user private information retrieval} \label{sec3}
Consider $S$ ($\geq 2$) non-colluding databases and each database stores $N$ ($\geq 2$) files $W_1, W_2, \ldots$, $W_N$ each of size $L$ bits. There are $K$ users connected to the databases via error-free links. Every user has a dedicated cache, which has the capability of storing $ML$ bits. Let $Z_u$ denotes the cache content of user $u$, $u \in [K]$. All caches are filled in the placement phase before knowing about the demands of the users. In the delivery phase, all the users cooperatively generate query sets for each database. Using the answers to the queries and the cache contents, users satisfy their demands without revealing any knowledge of the demand vector to the databases.

In this section, first we present an idea to improve the PD scheme given in \cite{ZWSJC2021} in terms of  subpacketization level and give an example. Then we present our scheme which is inspired by the CFL coded caching scheme \cite{CFL2016} and the PIR scheme given in Section \ref{sec2}.

In the PD scheme, authors used the combination of MAN coded caching scheme \cite{MN2014} and SJ (Sun and Jafar) PIR scheme \cite{SJ2017A}.  The subpacketization level in that scheme was ${K \choose t} S^{N}$, where $t=\frac{KM}{N}$. If we use the PIR scheme given in Section \ref{sec2} instead of using the SJ scheme in the PD scheme, then we get the same rate but with reduced subpacketization level of ${K \choose t} S^{N-1}$, where $t=\frac{KM}{N}$.
\begin{example}\label{ex0}
Consider the cache-aided MuPIR problem with $S=2, N=3, K=2$ and $M=3/2$. \\
\textit{Placement}: Divide each file $A,B,C$ into two subfiles, i.e., $A=(A_1, A_2)$, $B=(B_1, B_2)$ and $C=(C_1, C_2)$. The cache content of user $k \in \{ 1, 2 \}$ is $Z_k=\{ A_k, B_k, C_k \}$. \\
\textit{Delivery}: In the PD scheme, each subfile is divided into $8$ subsubfiles. Let $[A_i^1, A_i^2,$ $\ldots, A_i^{8}]$, $[B_i^1, B_i^2, \ldots, B_i^{8}]$ and $[C_i^1, C_i^2, \ldots, C_i^{8}]$ represent random permutations of $8$ subsubfiles of $A_i, B_i$ and $C_i$, respectively, generated by the users, where $i \in \{ 1,2\}$. Then the queries are given in Table \ref{ex0:tb1}(a) for the demand vector $(1,2)$.

\begin{table}[!t]
\caption{Query tables for Example \ref{ex0} \label{ex0:tb1}}
\centering
\resizebox{\columnwidth}{!}{%
\begin{tabular}{ cc }   
\begin{tabular}{ |c|c| }
 \hline
DB1 & DB2 \\
\hline
$A_2^1 +A_1^1 $            &            $A_2^2 +A_1^2 $   \\
$B_2^1 + B_1^1$		&		  $B_2^2 + B_1^2$ \\
$C_2^1 + C_1^1$		&		  $C_2^2 +C_1^2$ \\
$A_2^3+ B_2^2 + B_1^3+A_1^2$		&		  $A_2^5 + B_2^1 + B_1^5 + A_1^1$ \\
$A_2^4+ C_2^2 + C_1^3+A_1^3$		&		  $A_2^6 + C_2^1 + C_1^4 + A_1^4$ \\
$B_2^3+ C_2^3 + C_1^2+B_1^4$		&		  $B_2^4 + C_2^4 + C_1^1 + B_1^6$ \\
$A_2^7+ B_2^4 + C_2^4 + C_1^4+  B_1^7+A_1^4$		&		  $A_2^8+ B_2^3+ C_2^3 + C_1^3+  B_1^8+A_1^3$	 \\
\hline
\end{tabular} &  
\begin{tabular}{ |c|c| }
 \hline
DB1 & DB2 \\
\hline

$A_2^1 +A_1^1 $            &            $A_2^2 +B_2^1 + B_1^2 + A_1^1 $   \\
$B_2^1 + B_1^1$		&		  $A_2^3 + C_2^1 + C_1^2 + A_1^2$ \\
$C_2^1 + C_1^1$		&		  $B_2^2 +C_2^2  + C_1^1 + B_1^3$ \\
$A_2^4 + B_2^2 + C_2^2 + C_1^2 +B_1^4 + A_1^2$ &  \\
\hline
\end{tabular} \\
(a) & (b) \\
\end{tabular}
}
\end{table}


If we use the PIR scheme given in Section \ref{sec2} instead of using the SJ scheme, then each subfile is divided into $4$ subsubfiles. Let $[A_i^1, A_i^2, A_i^3, A_i^{4}]$, $[B_i^1, B_i^2, B_i^3, B_i^{4}]$ and $[C_i^1, C_i^2, C_i^3, C_i^{4}]$ represent random permutations of $4$ subsubfiles of $A_i, B_i$ and $C_i$, respectively, generated by the users, where $i \in \{ 1,2 \}$. Then the queries are given in Table \ref{ex0:tb1}(b) for the demand vector $(1,2)$.
  
%

\end{example}


\subsection{CFL coded caching scheme \cite{CFL2016}}\label{subsec3.0}
For a coded caching problem with $N$ files and $K$ users, CFL scheme gives the optimal rate for the cache size $M=\frac{1}{K}$ when $N\leq K$ . In orignal version of CFL scheme given in \cite{CFL2016}, each file is divided into $NK$ subfiles but it is sufficient to divide each file in $K$ subfiles. Therefore divide each file $W_i$ in $K$ subfiles denoted as $W_{i,1}, W_{i,2}, \ldots, W_{i,K}$ for all $i \in [N]$. 

\noindent\textbf{Placement phase:} The cache content of user $k \in [K]$ is
$$Z_k=\{ W_{1,k} \oplus W_{2,k} \oplus \cdots \oplus W_{N,k} \}.$$

\noindent\textbf{Delivery phase:} Let the demand vector be $d=(d_1,d_2, \ldots, d_K)$ and each file is demanded by at least one user. Without loss of generality assume that first $N$ users demanded all $N$ files, i.e., $(d_1,d_2, \ldots,d_N)=[N]$.
Then transmissions are as follows.
\begin{itemize}
\item For all $j\in [K]$, transmit subfiles $W_{1,j}, W_{2,j}, \ldots, W_{d_j-1,j}, W_{d_j+1,j}, \ldots, W_{N,j}$.
\item For all $j \in \{N+1, N+2, \ldots, K\}$, transmit $W_{d_j, j} \oplus W_{d_{j'},j'}$, where $j' \in [N]$ and $d_j=d_{j'}$.
\end{itemize}
The rate of this scheme is $R=N\left(1-\frac{1}{K} \right)$, which is optimal as it satisfy the cut-set lower bound \cite{MN2014}.

\subsection{Our proposed scheme} \label{subsec3.1}
In this subsection, we propose a scheme for cache-aided MuPIR for a specific value of cache size, and this scheme is inspired by the CFL coded caching scheme and the PIR scheme given in Section \ref{sec2}. Since the CFL scheme achieves the lower peak rate of caching if the number of users is no less than the number of files in the database, our scheme gives a better rate than the PD scheme for $N \leq K$. 


We are adopting the notation of $k$-sum and the function $\phi_{i}(S,k)$ from Section \ref{sec2} and defining some new notations which are used in this scheme.

\begin{itemize}

\item Define a function $\psi_i(S,k)$ for all $i \in [S]$ and $k \in [N]$, which denotes the number of repetition of a $k$-sum in the query to $i$-th database in this scheme.
$$\psi_i(S,k)=\left \lceil \frac{k(N-1)}{N} \ \phi_i(S,k) \right \rceil .$$

\item Let $q$ be an integer defined as
\begin{equation}\label{q}
q=\sum_{k=1}^{N} \sum_{s=1}^{S} {N \choose k} \psi_s(S,k).
\end{equation}

\item In this scheme, cache size is defined as,
\begin{equation}\label{M}
M=\frac{NS^{N-1}-q}{KS^{N-1}}.
\end{equation}

\item For $i \in [N]$ and $s \in [S]$, define a function
$$\mathcal{F}_s^{i,d} (k)= \begin{cases}
{N-1 \choose k-1} \phi_s(S,k) & \text{if} \ i\neq d \\
{N \choose k} \psi_s(S,k) - (N-1) {N-1 \choose k-1} \phi_s(S,k)     & \text{if} \ i=d 
\end{cases},
$$
where $d \in [N]$.

\end{itemize}

\vspace{2mm}

\noindent Divide each file $W_i, i \in [N]$ into $K$ subfiles denoted by $W_{i,1}, W_{i,2}, \ldots, W_{i,K}$. Further, divide every subfile $W_{i,j}$ into $S^{N-1}$ subsubfiles denoted by $W_{i,j}^1, W_{i,j}^2, \ldots, W_{i,j}^{S^{N-1}}$ for all $i \in [N]$ and $j \in [K]$ .

\textbf{Subpacketization:} Since every file $W_i$ is divided into $KS^{N-1}$ subsubfiles, $\forall i \in [N]$, subpacketization level is $KS^{N-1}$. Hence the size of each file should be greater than or equal to $KS^{N-1}$, i.e., $L \geq KS^{N-1}$. 

\vspace{2mm}

\textbf{Placement phase:} In this phase, all the users fill\ their caches. Let the cache content of user $u$ be denoted by $Z_u$, $u \in [K]$. In the time when the network is not busy, a database will broadcast $Z$, where
$$
Z=\left\{ W_{1,j}^t \oplus W_{2,j}^t \oplus \cdots \oplus W_{N,j}^t \ | \ j \in [K],  t=H+1, H+2, \ldots, S^{N-1} \right\}. 
$$
and $H=q-(N-1)S^{N-1}$. Then users will cooperatively generate a random permutation of $[K]$, say it is $P=(p_1,p_2, \ldots , p_K)$. By using the broadcasted data from the database, user $u$ will fill its cache with cache content
$$
Z_u = \left\{ W_{1,p_u}^t \oplus W_{2,p_u}^t \oplus \cdots \oplus W_{N,p_u}^t \ | \ t=H+1, H+2, \ldots, S^{N-1} \right\}, $$
for all $u \in [K]$.
Since the number of subsubfiles stored in the cache of a user is $S^{N-1}-H$ and the size of one subsubfile is $\frac{L}{KS^{N-1}}$, the total size of cache content is
$$
\frac{(S^{N-1}-H) L}{KS^{N-1}}=\left( \frac{S^{N-1}-q+(N-1)S^{N-1}}{KS^{N-1}} \right) L  =\left( \frac{NS^{N-1}-q}{KS^{N-1}} \right) L= ML. 
$$


For the delivery phase, users will cooperatively generate the query sets for each database using Algorithm \ref{algo2} and Algorithm \ref{algo3}, and these algorithms use Function 1 and Function 2. So first, we describe the details of these functions.

{
\begin{algorithm}
\NoCaptionOfAlgo
\caption*{\textbf{Function 1:} Generates query sets for each database $s \in S$ for $j \in [K]$.\label{func1}}
  \SetKwFunction{FMain}{QSet1}
  \SetKwProg{Fn}{Function}{:}{}
  \Fn{\FMain{$j$; $a^i, \ \forall i \in [N]$; $d$}}{
 $Q'_1 \gets \{\{W_{i,j}(a^i_1) \} \ | \ i \in [N] \} ;$ 

$Q'_s \gets \emptyset, \  \forall s=2,3, \ldots, S ;$

$t_i \gets 1, \ \forall i \in [N] ;$

\For {$k=2 \to N$}
{ 
   $T_i \gets t_i, \ \forall i \in [N]$\;
   \For {$s=1 \to S$}
   {
	$\mathcal{C}\gets [U \subseteq [N] \ | \ |U|=k] (\psi_s(S,k)\ \text{times})$\;
	$\mathcal{T} \gets \emptyset$\;
	 \For {$i=1 \to N$}
	{
		  \For {$i'=1 \to \mathcal{F}_s^{i,d} (k)$}
		{
			\eIf {$\exists \ U \in \mathcal{C} \ \text{such that} \ i \in U$}
			{
				\begin{equation}\label{add2.1}
					Q'_s \gets Q'_s \cup \{ \{ W_{i,j}(a^i_{++t_i})  \}  \bigcup_{u \in U \backslash \{i\}} \{ W_{u,j} 
                                    (a^u_{\leq T_u}) \} \} ;
				\end{equation}
				$\mathcal{C} \gets \mathcal{C} - \{U\}$\;
				$\mathcal{T} \gets \mathcal{T} \cup \{U\}$\;
			}			
			{
				choose ($U \in \mathcal{C}$ \ $\&$ \ $V \in \mathcal{T},$ whose corresponding query is   
                           $q'=\{ W_{v,j}(a^v_{t'_v}) \ | \ v \in V \} \in Q'_s$) s.t. $(i \in V, v_1 \in U \cup V \ \& \ t'_{v_1} >T_{v_1})$ 
				\begin{align}\label{add2.2}
				Q'_s &\gets Q'_s \cup \{\{W_{v_1,j}(a^{v_1}_{t'_{v_1}}) \} 
                                 \bigcup_{v \in V \backslash \{v_1\}} \{ W_{v,j}(a^v_{\leq T_{v}}) \} ; \\ 
				q' &\gets q' - \{ W_{i,j}(a^i_{t'_i}), W_{v_1,j}(a^{v_1}_{t'_{v_1}}) \} 
				           \bigcup \{ W_{i,j}(a^i_{++t_i}), W_{v_1,j}(a^{v_1}_{\leq T_{v_1}}) \} ;\nonumber
				\end{align}
				$\mathcal{C} \gets \mathcal{C} - \{U\}$\;
				$\mathcal{T} \gets \mathcal{T} \cup \{U\}$\;
			}
		}
	}
    }
}
        \KwRet{$Q'_{s},\  \forall s \in [S]$}\; 
}		
\textbf{Note}: \textbf{1.} The notation $W_{i,j}(x)$ denotes the subsubfile $W_{i,j}^x$.
\textbf{2.} In expression $W_{u,j}(a^u_{\leq T_u})$, choose an integer $\leq T_u$ in such a way that maintains the symmetry in the repetition of subsubfiles of every subfile $W_{i,j}$ in query sets $Q'_s$.
\end{algorithm}
%

%

\noindent \textbf{Correctness, privacy and output of Function 1:}

Function 1 generates the query sets corresponding to the subfiles $W_{1,j}, W_{2,j}, \ldots, W_{N,j},$ where $j \in [K]$. Therefore, the inputs for this algorithm are the value of $j$, $N$ permutations of $[S^{N-1}]$ and an integer $d \in [N]$. The output of this function will produce $S$ query sets $Q'_s, s \in [S]$, one for each database. This function is a generalization of Algorithm \ref{algo1}. In Algorithm \ref{algo1}, the aim is to find all the subfiles of the desired file and maintaining the symmetry along all $N$ files in the query set for each database. However, in Function 1, the aim is to find all the subsubfiles of subfiles $W_{i,j}$, where $i \neq d$ and maintaining the symmetry along all $N$ subfiles in the query set for each database. While maintaining the symmetry along all $N$ subfiles, we also obtained the $H$ subsubfiles of subfile $W_{d,j}$. So the idea is to use the cache content of the user $j$ to get the remaining $S^{N-1}-H$ subsubfiles of $W_{d,j}$.
\begin{itemize}
\item \textbf{Correctness:} To prove the correctness of Function 1, we need to prove the following points.

\begin{enumerate}

\item The number of sets in collection $\mathcal{C}$ is equal to the number of sets used to make queries in $Q'_s$, where $s\in [S]$, i.e.,
$${N\choose k} \psi_s(S,k)=\sum_{i=1}^{N} \mathcal{F}_s^{i,d}(k).$$
\begin{proof}
We have
\begin{align*}
\sum_{i=1}^{N} \mathcal{F}_s^{i,d}(k) &= \mathcal{F}_s^{d,d}(k) + \sum_{i=1, i \neq d}^{N} \mathcal{F}_s^{i,d}(k) \\
&= {N\choose k} \psi_{s}(S,k)-(N-1) {N-1 \choose {k-1}} \phi_s(S,k) +   \sum_{i=1, i \neq d}^{N} {N-1 \choose {k-1}} \phi_s(S,k) \\
&= {N\choose k} \psi_{s}(S,k).
\end{align*}
\end{proof}

\item For each subfile $W_{i,j}, i \in [N]$, the total $\mathcal{F}_s^{i,d}(k)$ number of $k$-sums, containing a new subsubfile of $W_{i,j}$, are getting added to $Q'_s$. Hence there should be enough sets of size $k$ in collection $\mathcal{C}$ which contain $i$, i.e.,
$$\mathcal{F}_s^{i,d}(k) \leq {N-1 \choose{k-1}} \psi_s(S,k).$$

\begin{proof}
For $i \neq d$, we have
$
\mathcal{F}_s^{i,d}(k) = {N-1 \choose{k-1}} \phi_s(S,k)  \leq {N-1 \choose{k-1}} \psi_s(S,k),
$
as $ \phi_s(S,k) \leq  \psi_s(S,k)$. 
For $i=d$, we have
$\mathcal{F}_s^{d,d}(k) = {N \choose{k}} \psi_s(S,k) - (N-1)  {N-1 \choose{k-1}} \phi_s(S,k).$
Since we have 
$$
\psi_s(S,k)= \left \lceil \frac{k(N-1)}{N} \phi_s(S,k) \right \rceil \leq \left \lceil \frac{kN}{N} \phi_s(S,k) \right \rceil = k \phi_s(S,k), $$
we get
\begin{align*}
\mathcal{F}_s^{i,d}(k) &\leq {N \choose{k}} k\phi_s(S,k) - (N-1)  {N-1 \choose{k-1}} \phi_s(S,k) \\
&= {N-1 \choose{k-1}} \phi_s(S,k) \leq {N-1 \choose{k-1}} \psi_s(S,k).
\end{align*}
\end{proof}
\end{enumerate}

\item \textbf{Privacy:} In Function 1, queries in $Q'_s, s\in [S]$ are being generated corresponding to every set in collection $\mathcal{C}$, and every subset of $[N]$ of size $k$ is contained in $\mathcal{C}$ equal number of times (which is $\psi_s(S,k)$). Therefore, symmetry is maintained in $Q'_s$ along all the subfiles $W_{i,j}, i \in[N]$.

\item \textbf{Output:} The $k$-sums containing new subsubfiles of subfiles $W_{i,j},$ $i \in [N]$ are being added to $Q'_s,$  $s \in [S]$ only in \eqref{add2.1} and \eqref{add2.2}, and in both the equations only one subsubfile (added with increment function $++t$) is new, and others already appeared in the query sets of other databases. Since this function starts with adding $1$-sums in $Q'_1$, each subsubfile that appeared in any of the query sets can be decoded. Therefore, to find the number of all subsubfiles which can be obtained from the answers to the query sets generated by Function 1, we only need to find the number of subsubfiles that appeared in all the query sets.
\begin{itemize}
\item The number of subsubfiles of a subfile $W_{i,j}$, where $i \in [N]$ and $ i\neq d$, obtained by the answers to the query sets $Q'_s, s\in [S]$ is
$
\eta_i = 1+\sum_{k=2}^{N} \sum_{s=1}^S \mathcal{F}_s^{i,d}(k).
$
Since for $i \neq d$,
$
 \mathcal{F}_s^{i,d}(1) = \sum_{s=1}^S {N-1 \choose{0}} \phi_s(S,1) =g(s,1) +(S-1)f(S,1) = 1+0=1, 
$
 we have
$$\eta_i =\sum_{k=1}^{N} \sum_{s=1}^S \mathcal{F}_s^{i,d} (k) 
= \sum_{k=1}^{N} \sum_{s=1}^S {N-1 \choose{k-1}} \phi_s(S,k) 
= S^{N-1}.$$


\item The number of subsubfiles of subfile $W_{d,j}$ obtained by the answers to the query sets $Q'_s, s\in [S]$,
\begin{align*}
\eta_d &= 1+\sum_{k=2}^{N} \sum_{s=1}^S \mathcal{F}_s^{d,d} (k) = \sum_{k=1}^{N} \sum_{s=1}^S \mathcal{F}_s^{d,d} (k) \\
&= \sum_{k=1}^{N} \sum_{s=1}^S \left( {N \choose k} \psi_s(S,k) - (N-1) {N-1 \choose{k-1}} \phi_s(S,k) \right)\\
&= \sum_{k=1}^{N} \sum_{s=1}^S {N \choose k} \psi_s(S,k) - (N-1)  \sum_{k=1}^{N} \sum_{s=1}^S {N-1 \choose{k-1}} \phi_s(S,k) \\
&=q-(N-1)S^{N-1} = H.
\end{align*}
\end{itemize}

 Hence using the answers to the query sets $Q'_s, s\in [S]$ generated by Function 1, the users can decode all $S^{N-1}$ subsubfiles of a subfile $W_{i,j}$, where $i \in [N]$ and $i \neq d$, and $H$ subsubfiles of the subfile $W_{d,j}$.
\end{itemize}

\begin{algorithm}
\NoCaptionOfAlgo
\caption*{\textbf{Function 2:} Generates query sets for each database $s \in S$ for $j \in [K] \backslash \mathcal{B}$.\label{func2}}

\SetAlgoLined


 \SetKwFunction{FMain}{QSet2}
  \SetKwProg{Fn}{Function}{:}{}
  \Fn{\FMain{$\omega_i(x), i \in [N]$ ; $a^i, \ \forall i \in [N]$}}{

 $Q'_1 \gets \{\{\omega_i(a^i_1) \} \ | \ i \in [N] \} ;$ 

$Q'_s \gets \emptyset, \  \forall s=2,3, \ldots, S ;$

$t_i \gets 1, \ \forall i \in [N] ;$

\For {$k=2 \to N$}
{ 
   $T_i \gets t_i, \ \forall i \in [N]$\;
   \For {$s=1 \to S$}
   {
	$\mathcal{C}\gets [U \subseteq [N] \ | \ |U|=k] (k\phi_s(S,k)\ \text{times})$\;
	$\mathcal{T} \gets \emptyset$\;
	 \For {$i=1 \to N$}
	{
		  \For {$i'=1 \to {N-1 \choose {k-1}} \phi_s(S,k)$}
		{
			\eIf {$\exists \ U \in \mathcal{C} \ \text{such that} \ i \in U$}
			{
				\begin{equation}\label{add3.1}
					Q'_s \gets Q'_s \cup \{ \{ \omega_i(a^i_{++t_i})  \}  \bigcup_{u \in U \backslash \{i\}} \{ \omega_u(a^u_{\leq T_u}) \} \};
				\end{equation}
				$\mathcal{C} \gets \mathcal{C} - \{U\}$\;
				$\mathcal{T} \gets \mathcal{T} \cup \{U\}$\;
			}			
			{
				choose $(U \in \mathcal{C} \ \& \ V \in \mathcal{T}, \ \text{whose corresponding query is} \ q'=\{ \omega_v(a^v_{t'_v}) \ | \ v \in V \} \in Q'_s) \ \text{s.t.} \ (i\in V, v_1   
                                          \in U \cup V \ \& \ t'_{v_1} >T_{v_1}$) 
				\begin{align}\label{add3.2}
				Q'_s &\gets Q'_s \cup \{\{\omega_{v_1}(a^{v_1}_{t'_{v_1}}) \}  \bigcup_{v \in V \backslash \{v_1\}} \{ \omega_v(a^v_{\leq T_{v}}) \} ; \\
				 q' &\gets q' - \{ \omega_i(a^i_{t'_i}), \omega_{v_1}(a^{v_1}_{t'_{v_1}}) \} \ \bigcup \{ \omega_i(a^i_{++t_i}), \omega_{v_1}(a^{v_1}_{\leq T_{v_1}}) \} ; \nonumber
				\end{align}
				$\mathcal{C} \gets \mathcal{C} - \{U\}$\;
				$\mathcal{T} \gets \mathcal{T} \cup \{U\}$\;
			}
		}
	}
    }
}
 \KwRet{$Q'_{s},\  \forall s \in [S]$}\; 
}
 \textbf{Note}: \textbf{1.} The notation $W_{i,j}(x)$ denotes the subsubfile $W_{i,j}^x$.
\textbf{2.}  In expression $\omega_{u}(a^u_{\leq T_u})$, choose an integer $\leq T_u$ in such a way that maintains the symmetry in the repetition of subsubfiles of every subfile $\omega_{i}$ in query sets $Q'_s$.
\end{algorithm}

\noindent \textbf{Correctness, privacy and output of Function 2}

The input of Function 2 contains $N$ functions $\omega_i(x), i\in [N]$ of the subfiles $W_{i,j}, i \in[N], j \in [K]$, for example, $\omega_i(x)=W_{i,j_1}(x) + W_{i,j_2}(x)$ for some $i \in [N], j_1,j_2 \in[K]$ and $x\in[S^{N-1}]$. This function generates query sets for each database which are symmetric among all $N$ functions $\omega_i(x), i\in [N]$. Using the answers to each query set, the user will get the values of all $S^{N-1}$ function of subsubfiles corresponding to each $\omega_i, i\in [N]$, i.e., users will be able to find the values of $\omega_i(x)$ for all $i \in [N]$ and $x \in [S^{N-1}]$.
\begin{itemize}
\item \textbf{Correctness:} To prove the correctness of Function 2, we need to prove the following points.

\begin{enumerate}

\item The number of sets in collection $\mathcal{C}$ is equal to the number of sets used to make queries in $Q'_s$, where $s\in [S]$, i.e.,
$${N\choose k} \left( k \phi_s(S,k) \right)= N {N-1 \choose{k-1}} \phi_s(S,k),$$
which can be easily proved.

\item For each  $\omega_{i}, i \in [N]$, the total ${N-1 \choose{k-1}} \phi_s(S,k)$ number of $k$-sums, containing a new subsubfiles corresponding to $\omega_{i}$, are getting added to $Q'_s$. Hence there should be enough sets of size $k$ in collection $\mathcal{C}$ which contain $i$, i.e.,
$${N-1 \choose{k-1}} \phi_s(S,k) \leq {N-1 \choose{k-1}} (k \phi_s(S,k)),$$
which is true.

\end{enumerate}

\item \textbf{Privacy:} In Function 2, queries in $Q'_s,  s\in [S]$ are being generated corresponding to every set in collection $\mathcal{C}$, and every subset of $[N]$ of size $k$ is contained in $\mathcal{C}$ equal number of times (which is $k\phi_s(S,k)$). Therefore, symmetry is maintained in $Q'_s$ along all the functions $\omega_{i}, i \in[N]$.

\item \textbf{Output:} The $k$-sums contaning new subsubfiles of $\omega_{i}. i \in [N]$ are being added to $Q'_s, s \in [S]$ only in \eqref{add3.1} and \eqref{add3.2}, and in both the equations only one subsubfile (added with increment function $++t$) is new, and others already appear in the query sets of other databases. Since this function starts with adding $1$-sums in $Q'_1$, each subsubfile that appeared in any of the query sets can be decoded. Therefore, to find the number of all subsubfiles which can be obtained from the answers to the query sets generated by Function 2, we only need to find the number of subsubfiles of $\omega_i, i \in [N]$ that appeared in all the query sets which is equal to
$$\eta_i = 1+\sum_{k=2}^N \sum_{s=1}^S {N-1 \choose{k-1}} \phi_s(S,k) = \sum_{k=1}^N \sum_{s=1}^S {N-1 \choose{k-1}} \phi_s(S,k) =S^{N-1}.$$
Hence using the answers to the query sets $Q'_s, s\in [S]$ generated by Function 2, the users can decode all $S^{N-1}$ subsubfiles of $\omega_{i}$, where $i \in [N]$.

\end{itemize}

\textbf{Delivery phase:}
Suppose the demand vector for all $K$ users is $\theta=(d_1, d_2, \ldots, d_K)$. All the users will cooperatively generate the query sets for each database using Algorithm \ref{algo2} and Algorithm \ref{algo3}. The generation of the query sets is different for the cases $N=K$ and $N<K$.

\subsubsection{For $N=K$}
%

\begin{algorithm}
\caption{Generates query sets for each database $s \in S$ for the proposed scheme for $N=K$. \label{algo2}}

\SetAlgoLined
\KwData{ Permutations $P^{\lambda i}, \forall i \in [N], \lambda \in [K]$, demand vector $\theta$}

\KwResult {Query sets $Q_{s},\  \forall s \in [S]$}
\For {$\lambda=1 \to K$}
{
		QSet1 ($p_{\lambda}; P^{\lambda i}, \ \forall i \in [N] ; d_{\lambda}$);

		$Q^{\lambda}_s \gets \left\{ \sum_{x \in Q} x \ | \ Q \in Q'_s \right\}, \ \forall  \ s \in[S] ;$
}
		$Q_s \gets \bigcup_{\lambda=1}^K Q^{\lambda}_s, \ \forall  \ s \in[S] ;$

\end{algorithm}

Here, we only consider the case when all $K$ users have distinct demands with the demand vector $\theta=(d_1, d_2, \ldots, d_K)$. For each $c \in [K]$, the users generate $N$ permutations $\{P^{c1},P^{c2}, \ldots, P^{cN}\}$ of the set $[S^{N-1}]$ such that
$P^{ci} = (p^i_1, p^i_2, \ldots, p^i_{S^{N-1}} ) \ \text{for all} \ i \in[N] $ and $(p^{d_{c}}_{H+1}, p^{d_{c}}_{H+2}, \ldots, p^{d_{c}}_{S^{N-1}}) = (H+1, H+2, \ldots, S^{N-1}).$
In Algorithm \ref{algo2}, for each $c \in [K]$, Function 1 (QSet1) is called with the following values of the parameters
$$p_c  \longrightarrow j; \ \ \ P^{ci} \longrightarrow a^i, \forall i \in[N]; \ \ \ d_{c} \longrightarrow d ,$$

and, we get query set $Q_s$ for each database $s \in [S]$ as the output.


\subsubsection*{Decoding}
Now we prove that after receiving answers to all the $Q_s, s\in [S]$ from each database, all the users will get their desired files.
Consider a user $c' \in [K]$ with demand of file $W_{d_{c'}}$. The file $W_{d_{c'}}$ has $K$ subfiles $W_{d_{c'},1}, W_{d_{c'},2}, \ldots, W_{d_{c'},K}$.
All the subsubfiles of subfile $W_{d_{c'},j}$, where $j \in [K]$ and $j \neq p_{c'}$ are obtained  when Algorithm \ref{algo2} calls Function 1 for $j \in [K]\backslash \{{c'}\}$, as all the users have distinct demands. Now, only $W_{d_{c'}, p_{c'}}$ subfile is left to be obtained for user $c'$.
First $H$ subsubfiles of $W_{d_{c'}, p_{c'}}$ are obtained  when Algorithm 1 calls Function 1 } for $c'$, as 
$(p^{d_{c'}}_{H+1}, p^{d_{{c'}}}_{H+2}, \ldots, p^{d_{{c'}}}_{S^{N-1}}) = (H+1, H+2, \ldots, S^{N-1}).$
 Now remaining $(S^{N-1}-H)$ subsubfiles of $W_{d_{c'}, p_{c'}}$ will be obtained using cache content of user $c'$,
$
Z_{c'} = \left\{ W_{1,p_{c'}}^t \oplus W_{2,p_{c'}}^t \oplus \cdots \oplus W_{N,p_{c'}}^t \ |  \ t=H+1, H+2, \ldots, S^{N-1} \right\},
$
as we already have all the subsubfiles of $W_{i, p_{c'}}$, where $i\in [K]$ and $ i \neq d_{c'}$, by calling Function 1 for $c'$ in Algorithm \ref{algo2}.

\begin{theorem}\label{thm4.1}
The proposed scheme achieves the rate $\frac{q}{S^{N-1}}$ for $N=K$.
\end{theorem}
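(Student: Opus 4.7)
The plan is to reduce the rate computation to a straightforward counting argument, using the output structure of Function 1 established earlier in the section.

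First I would recall the granularity of transmission: every file $W_i$ is split into $K S^{N-1}$ subsubfiles, so one subsubfile has size $L/(KS^{N-1})$. In Algorithm \ref{algo2}, each call QSet1$(p_\lambda; P^{\lambda i}, \forall i \in [N]; d_\lambda)$ produces, for every database $s \in [S]$, a collection $Q'_s$ whose sets of size $k$ are in bijection with the multiset $\mathcal{C}$ used to build them, namely $\binom{N}{k}\psi_s(S,k)$ many $k$-sums (this was verified in the correctness part of Function 1). Hence one call of Function 1 yields, across all $S$ databases,
\begin{equation*}
\sum_{k=1}^{N}\sum_{s=1}^{S} \binom{N}{k}\psi_s(S,k) \;=\; q
\end{equation*}
queries, each of which is transmitted back as the sum of $k$ subsubfiles, i.e.\ one subsubfile's worth of bits $L/(KS^{N-1})$.

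Next I would combine the $K$ calls. Algorithm \ref{algo2} invokes Function 1 once for every $\lambda \in [K]$ and takes $Q_s = \bigcup_{\lambda=1}^{K} Q_s^\lambda$. Since the queries across different $\lambda$ touch disjoint subfile indices $p_\lambda$ and therefore are distinct, the total number of scalar answers transmitted by all databases is $Kq$, giving
\begin{equation*}
D \;=\; Kq \cdot \frac{L}{KS^{N-1}} \;=\; \frac{q}{S^{N-1}}\,L,
\qquad\text{and hence}\qquad R \;=\; \frac{D}{L} \;=\; \frac{q}{S^{N-1}},
\end{equation*}
which is exactly the claimed rate.

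The only remaining step is to confirm that this transmission plus the cache content indeed satisfies every user, but this has already been carried out in the \emph{Decoding} paragraph just above the theorem: for every user $c' \in [K]$, Function 1 called with $\lambda \neq c'$ delivers all subsubfiles of $W_{d_{c'},j}$ for $j \neq p_{c'}$, Function 1 called with $\lambda = c'$ delivers the first $H$ subsubfiles of $W_{d_{c'},p_{c'}}$, and the cache content $Z_{c'}$ together with the already-decoded subsubfiles of $W_{i,p_{c'}}$ for $i\neq d_{c'}$ recovers the remaining $S^{N-1}-H$ subsubfiles via the XOR structure of $Z_{c'}$. So nothing beyond the above counting is needed.

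There is essentially no hard step here: the main care is bookkeeping, in particular making sure that (i) the count $\binom{N}{k}\psi_s(S,k)$ per database $s$ per call is really the number of queries emitted by Function 1 (already proved), and (ii) that the $K$ calls contribute additively without overlap so that multiplication by $K$ is legitimate. Both are immediate from the construction.
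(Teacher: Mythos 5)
Your proposal is correct and follows essentially the same route as the paper: count $q$ queries per call of Function 1 (using $\psi_s(S,1)=1$ only for $s=1$ to absorb the initial singletons), multiply by the $K$ calls, and divide by the subsubfile count $KS^{N-1}$. No further comment is needed.
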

\begin{proof}
Since in Algorithm \ref{algo2}, Function 1 is called for all $c\in [K]$, the total number of queries is $Kq'$, where $q'$ is the total number of queries generated by Function 1, which can be computed as follows.
\begin{align*}
q'&= \sum_{s=1}^S |Q'_s|= N+\sum_{s=1}^S \sum_{k=2}^N {N \choose k} \psi_{s}(S,k) \\
&=\sum_{s=1}^S {N \choose 1} \psi_{s}(S,1) + \sum_{s=1}^S \sum_{k=2}^N {N \choose k} \psi_{s}(S,k) \left( \text{as} \ \psi_s(S,1) =\begin{cases} 1 & \ \text{if}\ s=1 \\ 0 & \ \text{if}\ s\neq 1 \end{cases} \right) \\
&=\sum_{s=1}^S \sum_{k=1}^N {N \choose k} \psi_{s}(S,k) = q.
\end{align*}

Therefore, the total number of queries is $Kq$. Since the answer to every query of size of a subsubfile, Rate
$R=\frac{Kq}{KS^{N-1}} = \frac{q}{S^{N-1}}.$
 \end{proof}

\begin{example}\label{ex4.1}
Consider a PIR problem with $S=3$ servers, $N=3$ files and $K=3$ users. Each file is divided into $3$ subfiles, i.e.,
$W_i  \rightarrow W_{i,1}, W_{i,2}, W_{i,3} \ \ \forall i \in \{1, 2, 3\}.$
Each subfile is further divided into $9$ subsubfiles, i.e.,
$W_{i,j} \rightarrow W_{i,j}^1, W_{i,j}^2, \ldots, W_{i,j}^9, \ \forall i,j \in \{1, 2, 3\}.$
Now we compute the values of function $\psi_s(S,k)$, where $s, k \in \{1, 2, 3\}$ in Table \ref{tb3}.
\begin{table}[!t]
\caption{The values of functions $\phi_s(S,k)$ and $\psi_s(S,k)$ in  Example \ref{ex4.1}\label{tb3}}
\centering
\begin{tabular}{ cc }   
\begin{tabular}{|c|c|}
 \hline
$g(3,1)=1$ & $f(3,1)=0$ \\
$g(3,2)=0$ & $f(3,2)=1$ \\
$g(3,3)=2$ & $f(3,3)=1$ \\
\hline
\end{tabular} &  
\begin{tabular}{|c|c|}
 \hline
$G(3,1)=1$ & $F(3,1)=0$ \\
$G(3,2)=0$ & $F(3,2)=2$ \\
$G(3,3)=4$ & $F(3,3)=2$ \\
\hline
\end{tabular} \\
\end{tabular}
\end{table}
We have
$q=\sum_{k=1}^3 {3 \choose k} (G(3,k)+2F(3,k))=23$
and $H=q-(N-1)S^{N-1}=5$.
Let the size of each file be $L=27$ bits, and every user has cache of size $ML=\frac{NS^{N-1}-q}{KS^{N-1}}L=4$ bits. \\
\textbf{Placement phase: } Let $P=(p_1,p_2,p_3)$ be a random permutation of set $\{1,2,3\}$. The cache content of user $c \in \{1,2,3\}$ is
$Z_c=\{ W_{1,p_c}^t \oplus W_{2,p_c}^t \oplus W_{3,p_c}^t \ | \ t=6,7,8,9 \}.$ \\
\textbf{Delivery phase:} Let the demand vector be $\theta=(2,1,3)$. For each $j \in \{1,2,3\}$, the users generate three permutations of $\{1,2, \ldots, 9\}$ given as
$$a^j \rightarrow (a^j_1, a^j_2, \ldots, a^j_9), \ b^j \rightarrow (b^j_1, b^j_2, \ldots, b^j_9),  \ c^j \rightarrow (c^j_1, c^j_2, \ldots, c^j_9), $$
where $(b^1_6, b^1_7, b^1_8, b^1_9)=(a^2_6,a^2_7,a^2_8,a^2_9) =(c^3_6,c^3_7,c^3_8,c^3_9)=(6,7,8,9)$. Now after running Algorithm \ref{algo2}, we get the queries shown in Table \ref{tb4:ex1} for $j=1,2,3$.
\begin{table}
\caption{ Query tables for Example \ref{ex4.1}\label{tb4:ex1}}
\begin{subtable}[c]{0.5\textwidth}
\centering
\resizebox{\columnwidth}{!}{%
\begin{tabular}{ |c|c|c| } 
 \hline
 DB1 & DB2 & DB3  \\
\hline 
$W_{1,p_1}^{a^1_1}$   & $W_{1,p_1}^{a^1_2} + W_{2,p_1}^{b^1_1}$   & $W_{1,p_1}^{a^1_4} + W_{2,p_1}^{b^1_1}$  \\   
$W_{2,p_1}^{b^1_1}$   & $W_{1,p_1}^{a^1_3} + W_{3,p_1}^{c^1_1}$   & $W_{1,p_1}^{a^1_5} + W_{3,p_1}^{c^1_1}$  \\ 
$W_{3,p_1}^{c^1_1}$   & $W_{2,p_1}^{b^1_2} + W_{1,p_1}^{a^1_1}$   & $W_{2,p_1}^{b^1_4} + W_{1,p_1}^{a^1_1}$  \\

$W_{1,p_1}^{a^1_6} + W_{2,p_1}^{b^1_3} + W_{3,p_1}^{c^1_3}$   & $W_{2,p_1}^{b^1_3} + W_{3,p_1}^{c^1_1}$   & $W_{2,p_1}^{b^1_5} + W_{3,p_1}^{c^1_1}$  \\  
$W_{1,p_1}^{a^1_7} + W_{2,p_1}^{b^1_2} + W_{3,p_1}^{c^1_5}$    & $W_{3,p_1}^{c^1_2} + W_{1,p_1}^{a^1_1}$   & $W_{3,p_1}^{c^1_4} + W_{1,p_1}^{a^1_1}$  \\ 
$ W_{3,p_1}^{c^1_6} + W_{1,p_1}^{a^1_3} + W_{2,p_1}^{b^1_4} $   & $W_{3,p_1}^{c^1_3} + W_{2,p_1}^{b^1_1}$   & $W_{3,p_1}^{c^1_5} + W_{2,p_1}^{b^1_1}$  \\ 
$ W_{3,p_1}^{c^1_7} + W_{1,p_1}^{a^1_5} + W_{2,p_1}^{b^1_5} $  & $W_{1,p_1}^{a^1_8} + W_{2,p_1}^{b^1_5} + W_{3,p_1}^{c^1_5}$  & $W_{1,p_1}^{a^1_9} + W_{2,p_1}^{b^1_2} + W_{3,p_1}^{c^1_3}$ \\
 & $ W_{3,p_1}^{c^1_8} + W_{1,p_1}^{a^1_5} + W_{2,p_1}^{b^1_4} $  & $ W_{3,p_1}^{c^1_9} + W_{1,p_1}^{a^1_3} + W_{2,p_1}^{b^1_3}$ \\

 \hline

\end{tabular}}
\subcaption{For $j=1$}
\end{subtable}
\begin{subtable}[c]{0.5\textwidth}
\centering
\resizebox{\columnwidth}{!}{%
\begin{tabular}{ |c|c|c| } 
 \hline
 DB1 & DB2 & DB3  \\
\hline 
$W_{1,p_2}^{a^2_1}$   & $W_{1,p_2}^{a^2_2} + W_{2,p_2}^{b^2_1}$   & $W_{1,p_2}^{a^2_4} + W_{2,p_2}^{b^2_1}$  \\   
$W_{2,p_2}^{b^2_1}$   & $W_{1,p_2}^{a^2_3} + W_{3,p_2}^{c^2_1}$   & $W_{1,p_2}^{a^2_5} + W_{3,p_2}^{c^2_1}$  \\ 
$W_{3,p_2}^{c^2_1}$   & $W_{2,p_2}^{b^2_2} + W_{1,p_2}^{a^2_1}$   & $W_{2,p_2}^{b^2_4} + W_{1,p_2}^{a^2_1}$  \\

$W_{2,p_2}^{b^2_6} + W_{1,p_2}^{a^2_3} + W_{3,p_2}^{c^2_3}$   & $W_{2,p_2}^{b^2_3} + W_{3,p_2}^{c^2_1}$   & $W_{2,p_2}^{b^2_5} + W_{3,p_2}^{c^2_1}$  \\  
$W_{2,p_2}^{b^2_7} + W_{1,p_2}^{a^2_2} + W_{3,p_2}^{c^2_5}$   & $W_{3,p_2}^{c^2_2} + W_{1,p_2}^{a^2_1}$   & $W_{3,p_2}^{c^2_4} + W_{1,p_2}^{a^2_1}$  \\ 
$ W_{3,p_2}^{c^2_6} + W_{1,p_2}^{a^2_4} + W_{2,p_2}^{b^2_3} $   & $W_{3,p_2}^{c^2_3} + W_{2,p_2}^{b^2_1}$   & $W_{3,p_2}^{c^2_5} + W_{2,p_2}^{b^2_1}$  \\ 

$ W_{3,p_2}^{c^2_7} + W_{1,p_2}^{a^2_5} + W_{2,p_2}^{b^2_5} $  & $W_{2,p_2}^{b^2_8} + W_{1,p_2}^{a^2_5} + W_{3,p_2}^{c^2_5}$ & $W_{2,p_2}^{b^2_9} + W_{1,p_2}^{a^2_2} + W_{3,p_2}^{c^2_3}$ \\
 & $ W_{3,p_2}^{c^2_8} + W_{1,p_2}^{a^2_4} + W_{2,p_2}^{b^2_5} $  & $ W_{3,p_2}^{c^2_9} + W_{1,p_2}^{a^2_3} + W_{2,p_2}^{b^2_3}$ \\

 \hline
\end{tabular} }
\subcaption{For $j=2$}
\end{subtable}
\\
\center{
\begin{subtable}[c]{0.5\textwidth}
\centering
\resizebox{\columnwidth}{!}{%
\begin{tabular}{ |c|c|c| } 
 \hline
 DB1 & DB2 & DB3  \\
\hline 
$W_{1,p_3}^{a^3_1}$   & $W_{1,p_3}^{a^3_2} + W_{2,p_3}^{b^3_1}$   & $W_{1,p_3}^{a^3_4} + W_{2,p_3}^{b^3_1}$  \\   
$W_{2,p_3}^{b^3_1}$   & $W_{1,p_3}^{a^3_3} + W_{3,p_3}^{c^3_1}$   & $W_{1,p_3}^{a^3_5} + W_{3,p_3}^{c^3_1}$  \\ 
$W_{3,p_3}^{c^3_1}$   & $W_{2,p_3}^{b^3_2} + W_{1,p_3}^{a^3_1}$   & $W_{2,p_3}^{b^3_4} + W_{1,p_3}^{a^3_1}$  \\

$W_{1,p_3}^{a^3_6} + W_{2,p_3}^{b^3_3} + W_{3,p_3}^{c^3_3}$   & $W_{2,p_3}^{b^3_3} + W_{3,p_3}^{c^3_1}$   & $W_{2,p_3}^{b^3_5} + W_{3,p_3}^{c^3_1}$  \\  
$W_{1,p_3}^{a^3_7} + W_{2,p_3}^{b^3_5} + W_{3,p_3}^{c^3_2}$    & $W_{3,p_3}^{c^3_2} + W_{1,p_3}^{a^3_1}$   & $W_{3,p_3}^{c^3_4} + W_{1,p_3}^{a^3_1}$  \\ 

$ W_{2,p_3}^{b^3_6} + W_{1,p_3}^{a^3_3} + W_{3,p_3}^{c^3_4} $   & $W_{3,p_3}^{c^3_3} + W_{2,p_3}^{b^3_1}$   & $W_{3,p_3}^{c^3_5} + W_{2,p_3}^{b^3_1}$  \\ 
$ W_{2,p_3}^{b^3_7} + W_{1,p_3}^{a^3_5} + W_{3,p_3}^{c^3_5} $  & $W_{1,p_3}^{a^3_8} + W_{2,p_3}^{b^3_5} + W_{3,p_3}^{c^3_4}$  & $W_{1,p_3}^{a^3_9} + W_{2,p_3}^{b^3_3} + W_{3,p_3}^{c^3_2}$ \\

 & $ W_{2,p_3}^{b^3_8} + W_{1,p_3}^{a^3_5} + W_{3,p_3}^{c^3_5} $  & $ W_{2,p_3}^{b^3_9} + W_{1,p_3}^{a^3_3} + W_{3,p_3}^{c^3_3} $ \\

 \hline
\end{tabular}}
\subcaption{For $j=3$}
\end{subtable}}
\end{table}

From the Table \ref{tb4:ex1}, we get all the subsubfiles of the subfiles $W_{1,p_1}, W_{3,p_1}, W_{2,p_2}, W_{3,p_2},$ $W_{1, p_3}, W_{2,p_3}$ and first $5$ subsubfiles of the subfiles $W_{2,p_1}, W_{1,p_2}, W_{3,p_3}$. The demand of user $1$ is $W_2$ and the user already got two subfiles $W_{2,p_2}, W_{2,p_3}$ completely and first $5$ subsubfiles of $W_{2, p_1}$. Since the cache content of user $1$ is
$Z_1=\{ W_{1,p_1}^t \oplus W_{2,p_1}^t \oplus W_{3,p_1}^t \ | \ t=6,7,8,9 \},$
remaining $4$ subsubfiles of $W_{2, p_1}$ can be obtained with the help of cache content $Z_1$. In a similar manner, the other two users will get their desired files. The rate in this scheme is $R=\frac{q}{S^{N-1}}=\frac{23}{9}=2.556$, while the rate of the PD scheme is $R'=2.769.$
\end{example}

\subsubsection{For $N<K$}\label{subsec4.4}

\begin{algorithm}
\caption{Generates query sets for each database $s \in S$ for the scheme [\ref{subsec3.1}] for $N <K$.\label{algo3}}

\SetAlgoLined
\KwData{ Permutations $P^{\lambda i}, \forall i \in [N], \lambda \in [K]$, demand vector $\theta$}

\KwResult {Query sets $Q_{s},\  \forall s \in [S]$}
\ForEach {$\lambda \in \mathcal{B}$}
{
		QSet1 ($p_{\lambda}; P^{\lambda i}, \ \forall i \in [N] ; d_{\lambda}$);

		$Q^{\lambda}_s \gets \left\{ \sum_{x \in Q} x \ | \ Q \in Q'_s \right\}, \ \forall  \ s \in[S] ;$
}
\ForEach {$\lambda \in [K]\backslash \mathcal{B}$}
{
		QSet2 ($W_{i,p_{\rho^i}}^x + W_{i,p_c}^x , \forall i \in[N] ;$ $P^{\lambda i}, \ \forall i \in [N] $);

		$Q^{\lambda}_s \gets \left\{ \sum_{x \in Q} x \ | \ Q \in Q'_s \right\}, \ \forall  \ s \in[S] ;$
}
		$Q_s \gets \bigcup_{\lambda=1}^K Q^{\lambda}_s, \ \forall  \ s \in[S] ;$

\end{algorithm}

Assuming that each file is demanded by at least one user, consider that the demand vector is $\theta = (d_1,d_2, \ldots, d_K)$. Now choose a set $\mathcal{B} \subseteq [K]$ (call it base set) such that $|\mathcal{B}|=N$ and $\{ d_c \ | \ c \in \mathcal{B}\} = [N]$. For each $c \in \mathcal{B}$, the users generate $N$ permutations $\{P^{c1},P^{c2}, \ldots, P^{cN}\}$ of the set $[S^{N-1}]$ such that
$P^{ci}  = (p^i_1, p^i_2, \ldots, p^i_{S^{N-1}} ) \ \text{for all} \ i \in[N] $ and
$(p^{d_{c}}_{H+1}, \ p^{d_{c}}_{H+2}, \ldots, p^{d_{c}}_{S^{N-1}}) = (H+1, H+2, \ldots, S^{N-1}).$
In Algorithm \ref{algo3}, for each $c \in \mathcal{B}$, Function 1 (QSet1) is called with the following values of the parameters
$$p_c \longrightarrow j; \ \ P^{ci} \longrightarrow a^i, \forall i \in[N]; \ \ d_{c} \longrightarrow d .$$

For each $c \in \mathcal{B}$, we get $S$ query sets for each database as output of Function 1, and denote them as
  $ Q^c_s, \ \text{for all}\ s \in[S].$
 
Now for each $c \in [K] \backslash \mathcal{B}$, the users generate
\begin{itemize}
\item  a one-to-one mapping from $[N]$ to $\mathcal{B}$, say $\rho = (\rho^1, \rho^2, \ldots, \rho^N)$, such that
$d_{\rho^j}=j$ for $j=d_c$, and $d_{\rho^j} \neq j$ for all $j \in [N]$ and $j \neq d_c$.

\item $N$ random permutations $\{P^{c1}, P^{c2}, \ldots, P^{cN}\}$ of the set $[S^{N-1}]$ such that
$
P^{ci}  = (p^i_1, p^i_2, \ldots, p^i_{S^{N-1}} )$ for all $i \in[N].$
\end{itemize}
In Algorithm \ref{algo3}, for each $c \in [K] \backslash \mathcal{B}$, Function 2 (QSet2) is called with the following values of the parameters
\begin{align*}
W_{i,p_{\rho^i}}^x + W_{i,p_c}^x &\longrightarrow \omega_i(x), \forall i \in[N] \\
P^{ci} &\longrightarrow a^i, \forall i \in[N] .
\end{align*}
Again for each $c \in [K] \backslash \mathcal{B}$, we get $S$ query sets for each database as output of Function 2, denoted as 
  $Q^c_s, \ \text{for all}\ s \in[S].$
Therefore, the query set generated by Algorithm \ref{algo3} for database $s$ is
$Q_s=\bigcup_{c=1}^K Q^c_s,$ for all $s\in [S]$

\begin{note*}
For decoding the demand of user $c, c \in [K] \backslash \mathcal{B}$, we need subfiles $ W_{i,p_c}$ for $i \in [N],  i\neq d_c$ and  the subfile $W_{d_c,p_{c'}}+ W_{d_c,p_c}$, where $c' \in \mathcal{B}$ and $d_{c}=d_{c'}$. However, to maintain the symmetry along all $c' \in \mathcal{B}$, we consider a one-to-one mapping $\rho$ from $[N]$ to $\mathcal{B}$, and obtained the subfiles $W_{i,p_{\rho^i}} + W_{i,p_c}$ for all $i \in [N]$ using Function 2.
\end{note*}

\subsubsection*{Decoding}
In this subsection, we prove that after receiving answers to all the $Q_s, s\in [S]$ from each database, all the users will get their desired files.
\begin{enumerate}
\item Consider a user $c \in \mathcal{B}$ with demand of file $W_{d_{c}}$. 
 The file $W_{d_{c}}$ has $K$ subfiles $W_{d_{c},1}, W_{d_{c},2}, \ldots,$ $W_{d_{c},K}$.
 User will directly get all the subsubfiles of subfiles $W_{d_c,p_j}$, where $j \in \mathcal{B}$ and $j \neq c$, as Function 1 is called in Algorithm \ref{algo3} for every $j \in \mathcal{B}$ and each user in $\mathcal{B}$ has distinct demands.
 By calling Function 1 for $c$, first $H$ subsubfiles of subfile $W_{d_c,p_c}$ will be obtained as $(p^{d_{c}}_{H+1}, p^{d_{c}}_{H+2}, \ldots, p^{d_{c}}_{S^{N-1}}) = (H+1, H+2, \ldots, S^{N-1})$. Now remaining $S^{N-1}-H$ subsubfiles of $W_{d_c,p_c}$ can be obtained using the cache content of user $c$, which is 
$$
Z_{c} = \left\{ W_{1,p_{c}}^t \oplus W_{2,p_{c}}^t \oplus \cdots \oplus W_{N,p_{c}}^t \ |  t=H+1, H+2, \ldots, S^{N-1} \right\}, $$
as the user already have all the subsubfiles of $W_{i,p_c}, i \in [N]$ and $i \neq d_c$ by calling Function 1 for $c$.
 Until now user $c$ has obtained subfiles $W_{d_c,p_j}, j \in \mathcal{B}$. The subfiles remains to be obtained are $W_{d_c,p_j}, j \in [K] \backslash \mathcal{B}$.
 In Algorithm \ref{algo3}, by calling Function 2 for $j \in [K] \backslash \mathcal{B}$, we get all subsubfiles of $W_{i,p_{\rho^i}} + W_{i,p_{j}}$, where $i\in [N]$ and $\rho^i \in \mathcal{B}$. For $i=d_c$, we have $W_{d_c,p_{\rho^{d_c}}} + W_{d_c,p_{j}}$. Since we already have all subfiles $W_{d_c,p_{j'}}, j' \in \mathcal{B}$ and $\rho^{d_c} \in \mathcal{B}$, we get $W_{d_c, p_{j}}$.

\item Consider a user $c \in [K] \backslash \mathcal{B}$ with demand of file $W_{d_{c}}$. 
 The file $W_{d_{c}}$ has $K$ subfiles $W_{d_{c},1}, W_{d_{c},2}, \ldots, W_{d_{c},K}$. Clearly, there exist $c' \in \mathcal{B}$ such that $d_c=d_{c'}$. The user $c$ will directly get all the subsubfiles of subfiles $W_{d_c,p_j}$, where $j \in \mathcal{B}$ and $j \neq c'$, as in Algorithm \ref{algo3}, Function 1 is called for every $j \in \mathcal{B}$ and each user in $\mathcal{B}$ has distinct demands.
 By calling Function 1 for $c'$, first $H$ subsubfiles of subfile $W_{d_c,p_{c'}}$ will be obtained, i.e., $W_{d_c,p_{c'}}^1, W_{d_c,p_{c'}}^2, \ldots, W_{d_c,p_{c'}}^H$. By calling Function 2 for $c$, we get $W_{i,p_{\rho^i}} + W_{i,p_c}$, where $i \in [N]$ and $\rho^i \in \mathcal{B}$. Now there are two type of cases,
\begin{enumerate}
\item When $i \neq d_c$, we have $d_{\rho^i} \neq i$. Since $\rho^i \in \mathcal{B}$ and $d_{\rho_i} \neq i$, we have subfile $W_{i,p_{\rho^i}}$ when Function 1 is called in Algorithm \ref{algo3} for $\rho^i$. Therefore, user $c$ gets $W_{i,p_c}$, for all $i \in [N]$ and $i \neq d_c$ (using $W_{i,p_{\rho^i}} + W_{i,p_c}$). Now using the cache content of user $c$, which is 
$$
Z_{c} = \left\{ W_{1,p_{c}}^t \oplus W_{2,p_{c}}^t \oplus \cdots \oplus W_{N,p_{c}}^t \ | \ t=H+1, H+2, \ldots, S^{N-1} \right\}, 
$$
we get $S^{N-1}-H$ subsubfiles of $W_{d_c, p_c}$, i.e., $W_{d_c, p_c}^{H+1}, W_{d_c, p_c}^{H+2}, \ldots, W_{d_c, p_c}^{S^{N-1}}$.

\item When $i=d_c$, we have $W_{d_c,p_{\rho^{d_c}}} + W_{d_c,p_c}$ for $d_{\rho^i}=i$. That means we have $W_{d_c,p_{c'}} + W_{d_c,p_c}$ (as $\mathcal{B}$ contains only one user with demand $d_c$ which is $c'$, i.e., $\rho^{d_c}=\rho^i=c'$).  Since user $c$ already have first $H$ subsubfiles of $W_{d_c,p_{c'}}$ and last $S^{N-1}-H$ subsubfiles of $W_{d_c,p_c}$, the user obtained both subfiles $W_{d_c,p_{c'}}$ and $W_{d_c,p_c}$, completely.
\end{enumerate}
 Until now user $c$ has obtained subfiles $W_{d_c,p_j}, j \in \mathcal{B}$ and $W_{d_c, p_c}$. The subfiles remains to be obtained are $W_{d_c,p_j}, j \in [K] \backslash (\mathcal{B}\cup \{c\})$.  By calling Function 2 in Algorithm \ref{algo3} for $j \in [K] \backslash (\mathcal{B}\cup \{c\})$, we get $W_{i,p_{\rho^i}} + W_{i,p_j}$, where $i \in [N]$ and $\rho^i \in \mathcal{B}$. For $i=d_c$, we have $W_{d_c,p_{\rho^{d_c}}} + W_{d_c,p_j}$. Since user $c$ already has subfile $W_{d_c,p_{\rho^{d_c}}}$, user can obtain subfile $W_{d_c,p_j}$.
\end{enumerate}

\begin{theorem}\label{thm4.2}
The proposed scheme achieves the rate $\frac{N}{K}\left[ \frac{q}{S^{N-1}} +K-N \right]$ for $N<K$.
\end{theorem}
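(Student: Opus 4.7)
The plan is to count the total number of subsubfile-sized answers returned by the databases under Algorithm \ref{algo3}, then divide by the file size $L$. Since each query is answered with a single subsubfile, whose size is $\tfrac{L}{KS^{N-1}}$, this reduces the problem to counting queries generated by the two helper functions and then adding contributions from the $N$ calls to Function 1 (for indices in $\mathcal{B}$) and the $K-N$ calls to Function 2 (for indices in $[K]\setminus\mathcal{B}$).

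From the proof of Theorem \ref{thm4.1}, each invocation of Function 1 produces a total of $q=\sum_{k=1}^{N}\sum_{s=1}^{S}\binom{N}{k}\psi_s(S,k)$ queries across all $S$ databases. The new piece is Function 2. Here I would observe from the body of Function 2 that the initial step contributes $N$ one-sums to $Q'_1$, and for each $k\geq 2$ and each $s\in[S]$ the outer/inner loops insert exactly $N\binom{N-1}{k-1}\phi_s(S,k)=k\binom{N}{k}\phi_s(S,k)$ queries into $Q'_s$ (either through \eqref{add3.1} or the swap \eqref{add3.2}). Summing, the total number of queries produced by a single Function 2 call is
\begin{equation*}
q''=\sum_{k=1}^{N}k\binom{N}{k}\sum_{s=1}^{S}\phi_s(S,k).
\end{equation*}

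The main calculation — and the only nontrivial step — is evaluating $q''$ in closed form. Using the closed forms of $g(S,k)$ and $f(S,k)$ derived in Section \ref{sec2}, a short computation gives $\sum_{s=1}^{S}\phi_s(S,k)=g(S,k)+(S-1)f(S,k)=(S-1)^{k-1}$ for every $k\ge 1$. Combining this with the standard identity $\sum_{k=1}^{N}k\binom{N}{k}x^{k-1}=N(1+x)^{N-1}$ evaluated at $x=S-1$ yields $q''=N\,S^{N-1}$. This is the step where all the structure of $\phi_s(S,k)$ is really being used; once it is established, the rest is bookkeeping.

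Finally, putting the two counts together, Algorithm \ref{algo3} transmits
\begin{equation*}
D=\bigl[Nq+(K-N)\,q''\bigr]\cdot\frac{L}{KS^{N-1}}=\frac{NL}{K}\Bigl[\frac{q}{S^{N-1}}+(K-N)\Bigr]
\end{equation*}
bits in total, so $R=D/L=\tfrac{N}{K}\bigl[\tfrac{q}{S^{N-1}}+K-N\bigr]$, as claimed. The correctness and privacy of the underlying queries have already been argued in the discussion of Algorithm \ref{algo3} and the per-function analyses, so the theorem reduces precisely to the combinatorial identity above.
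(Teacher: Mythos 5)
Your proposal is correct and follows essentially the same route as the paper: decompose the download into $N$ calls to Function 1 (each contributing $q$ queries, by the $N=K$ analysis) and $K-N$ calls to Function 2 (each contributing $q''=NS^{N-1}$ queries), then divide by $KS^{N-1}$. The only cosmetic difference is that you evaluate $q''$ via $\sum_{s=1}^{S}\phi_s(S,k)=(S-1)^{k-1}$ and the binomial identity, whereas the paper rewrites $k\binom{N}{k}=N\binom{N-1}{k-1}$ and reuses the already-established sum $\sum_{k,s}\binom{N-1}{k-1}\phi_s(S,k)=S^{N-1}$; both computations are equivalent.
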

\begin{proof}
Since Function 1 is called for each $c \in \mathcal{B}$, and Function 2 is called for each  $c' \in [K] \backslash \mathcal{B}$ in Algorithm \ref{algo3}, the total number of queries is $Nq'+(K-N)q''$, where $q'$ and $q''$ are the numbers of queries generated in Function 1 and Function 2, respectively. From Theorem \ref{thm4.1}, we know that $q'=q$. The number of queries generated in Function 2 is
\begin{align*}
q'' &= \sum_{s=1}^S Q'_s = N+\sum_{s=1}^S \sum_{k=2}^N (k \phi_s(S,k)) {N \choose k} \\
&= \sum_{s=1}^S \sum_{k=1}^N k {N \choose k} \phi_s(S,k)=N \sum_{s=1}^S \sum_{k=1}^N  {N-1 \choose {k-1}} \phi_s(S,k)=NS^{N-1}.
\end{align*}
Hence, the total number of queries is $Nq+N(K-N)S^{N-1}$. Since the answer for each query is of the size of a subsubfile, we have
$$R=\frac{Nq+N(K-N)S^{N-1}}{KS^{N-1}}=\frac{N}{K}\left[ \frac{q}{S^{N-1}} +K-N \right].$$
\end{proof}

\begin{example}\label{ex4.2}
Consider a PIR problem with $S=3$ servers, $N=3$ files and $K=5$ users. Each file is divided into $5$ subfiles, i.e.,
$W_i  \rightarrow W_{i,1}, W_{i,2}, W_{i,3}, W_{i,4}, W_{i,5}, \ \ \forall i \in \{1,2,3\}.$
Each subfile is further divided into $9$ subsubfiles, i.e.,
$W_{i,j} \rightarrow W_{i,j}^1, W_{i,j}^2, \ldots, W_{i,j}^9, \ \forall i \in \{1,2,3\}, j \in \{ 1,2,3,4,5\}.$
Now we compute the values of function $\psi_s(S,k)$, where $s, k\in \{1,2,3\}$ which is same as in Example \ref{ex4.1}. So we have $q=23$, and $H=q-(N-1)S^{N-1}=5$.
Let the size of each file be $L=45$ bits, and every user has cache of size $ML=\frac{NS^{N-1}-q}{KS^{N-1}}L=4$ bits. \\
\textbf{Placement phase: } Let $P=(p_1, p_2, p_3, p_4, p_5)$ be a random permutation of set $\{1,2,3,4,5\}$. The cache content of user $c \in \{1,2,3,4,5\}$ is
$Z_c=\{ W_{1,p_c}^t \oplus W_{2,p_c}^t \oplus W_{3,p_c}^t \ | \ t=6,7,8,9 \}.$ \\
\textbf{Delivery phase:} Let the demand vector be $\theta=(2,3,2,1,3)$. Consider a base set $\mathcal{B}=\{1,2,4\}$. For each $j \in \mathcal{B}$, consider three permutations of $\{1,2, \ldots, 9\}$ generated by the users and given as
$$a^j \rightarrow (a^j_1, a^j_2, \ldots, a^j_9), b^j \rightarrow (b^j_1, b^j_2, \ldots, b^j_9), c^j \rightarrow (c^j_1, c^j_2, \ldots, c^j_9) ,$$
where $(b^1_6, b^1_7, b^1_8, b^1_9)=(c^2_6,c^2_7,c^2_8,c^2_9) =(a^4_6,a^4_7,a^4_8,a^4_9)=(6,7,8,9)$. Now after running Algorithm \ref{algo3}, we get the queries shown in Table \ref{tb7:ex2} for $j=1,2,4$.

\begin{table}
\caption{Query tables for Example \ref{ex4.2} \label{tb7:ex2}}
\begin{subtable}[c]{0.5\textwidth}
\centering
\resizebox{\columnwidth}{!}{%
\begin{tabular}[\textwidth]{ |c|c|c| } 
 \hline
 DB1 & DB2 & DB3  \\
\hline 
$W_{1,p_1}^{a^1_1}$   & $W_{1,p_1}^{a^1_2} + W_{2,p_1}^{b^1_1}$   & $W_{1,p_1}^{a^1_4} + W_{2,p_1}^{b^1_1}$  \\   
$W_{2,p_1}^{b^1_1}$   & $W_{1,p_1}^{a^1_3} + W_{3,p_1}^{c^1_1}$   & $W_{1,p_1}^{a^1_5} + W_{3,p_1}^{c^1_1}$  \\ 
$W_{3,p_1}^{c^1_1}$   & $W_{2,p_1}^{b^1_2} + W_{1,p_1}^{a^1_1}$   & $W_{2,p_1}^{b^1_4} + W_{1,p_1}^{a^1_1}$  \\

$W_{1,p_1}^{a^1_6} + W_{2,p_1}^{b^1_3} + W_{3,p_1}^{c^1_3}$   & $W_{2,p_1}^{b^1_3} + W_{3,p_1}^{c^1_1}$   & $W_{2,p_1}^{b^1_5} + W_{3,p_1}^{c^1_1}$  \\  
$W_{1,p_1}^{a^1_7} + W_{2,p_1}^{b^1_2} + W_{3,p_1}^{c^1_5}$    & $W_{3,p_1}^{c^1_2} + W_{1,p_1}^{a^1_1}$   & $W_{3,p_1}^{c^1_4} + W_{1,p_1}^{a^1_1}$  \\ 
$ W_{3,p_1}^{c^1_6} + W_{1,p_1}^{a^1_3} + W_{2,p_1}^{b^1_4} $   & $W_{3,p_1}^{c^1_3} + W_{2,p_1}^{b^1_1}$   & $W_{3,p_1}^{c^1_5} + W_{2,p_1}^{b^1_1}$  \\ 
$ W_{3,p_1}^{c^1_7} + W_{1,p_1}^{a^1_5} + W_{2,p_1}^{b^1_5} $  & $W_{1,p_1}^{a^1_8} + W_{2,p_1}^{b^1_5} + W_{3,p_1}^{c^1_5}$  & $W_{1,p_1}^{a^1_9} + W_{2,p_1}^{b^1_2} + W_{3,p_1}^{c^1_3}$ \\
 & $ W_{3,p_1}^{c^1_8} + W_{1,p_1}^{a^1_5} + W_{2,p_1}^{b^1_4} $  & $ W_{3,p_1}^{c^1_9} + W_{1,p_1}^{a^1_3} + W_{2,p_1}^{b^1_3}$ \\

 \hline

\end{tabular}}
\subcaption{For $j=1$}
\end{subtable}
\begin{subtable}[c]{0.5\textwidth}
\centering
\resizebox{\columnwidth}{!}{%
\begin{tabular}[\textwidth]{ |c|c|c| } 
 \hline
 DB1 & DB2 & DB3  \\
\hline 
$W_{1,p_2}^{a^2_1}$   & $W_{1,p_2}^{a^2_2} + W_{2,p_2}^{b^2_1}$   & $W_{1,p_2}^{a^2_4} + W_{2,p_2}^{b^2_1}$  \\   
$W_{2,p_2}^{b^2_1}$   & $W_{1,p_2}^{a^2_3} + W_{3,p_2}^{c^2_1}$   & $W_{1,p_2}^{a^2_5} + W_{3,p_2}^{c^2_1}$  \\ 
$W_{3,p_2}^{c^2_1}$   & $W_{2,p_2}^{b^2_2} + W_{1,p_2}^{a^2_1}$   & $W_{2,p_2}^{b^2_4} + W_{1,p_2}^{a^2_1}$  \\

$ W_{1,p_2}^{a^2_6} +  W_{2,p_2}^{b^2_3} + W_{3,p_2}^{c^2_2}$   & $W_{2,p_2}^{b^2_3} + W_{3,p_2}^{c^2_1}$   & $W_{2,p_2}^{b^2_5} + W_{3,p_2}^{c^2_1}$  \\  
$W_{1,p_2}^{a^2_7} + W_{2,p_2}^{b^2_5} +  W_{3,p_2}^{c^2_4}$   & $W_{3,p_2}^{c^2_2} + W_{1,p_2}^{a^2_1}$   & $W_{3,p_2}^{c^2_4} + W_{1,p_2}^{a^2_1}$  \\ 
$ W_{2,p_2}^{b^2_6}+ W_{1,p_2}^{a^2_3} + W_{3,p_2}^{c^2_3}   $   & $W_{3,p_2}^{c^2_3} + W_{2,p_2}^{b^2_1}$   & $W_{3,p_2}^{c^2_5} + W_{2,p_2}^{b^2_1}$  \\ 

$ W_{2,p_2}^{b^2_7}+ W_{1,p_2}^{a^2_5} + W_{3,p_2}^{c^2_5} $  & $ W_{1,p_2}^{a^2_8} +  W_{2,p_2}^{b^2_5} + W_{3,p_2}^{c^2_4}$ & $ W_{1,p_2}^{a^2_9} +  W_{2,p_2}^{b^2_3} + W_{3,p_2}^{c^2_2}$ \\
 & $ W_{2,p_2}^{b^2_8}+ W_{1,p_2}^{a^2_5} + W_{3,p_2}^{c^2_5} $  & $ W_{2,p_2}^{b^2_9}+ W_{1,p_2}^{a^2_3} + W_{3,p_2}^{c^2_3}$ \\

 \hline

\end{tabular}}
\subcaption{For $j=2$}
\end{subtable}
\\
\center{
\begin{subtable}[c]{0.5\textwidth}
\centering
\resizebox{\columnwidth}{!}{%

\begin{tabular}[\textwidth]{ |c|c|c| } 
 \hline
 DB1 & DB2 & DB3  \\
\hline 
$W_{1,p_4}^{a^4_1}$   & $W_{1,p_4}^{a^4_2} + W_{2,p_4}^{b^4_1}$   & $W_{1,p_4}^{a^4_4} + W_{2,p_4}^{b^4_1}$  \\   
$W_{2,p_4}^{b^4_1}$   & $W_{1,p_4}^{a^4_3} + W_{3,p_4}^{c^4_1}$   & $W_{1,p_4}^{a^4_5} + W_{3,p_4}^{c^4_1}$  \\ 
$W_{3,p_4}^{c^4_1}$   & $W_{2,p_4}^{b^4_2} + W_{1,p_4}^{a^4_1}$   & $W_{2,p_4}^{b^4_4} + W_{1,p_4}^{a^4_1}$  \\

$W_{2,p_4}^{b^4_6} + W_{1,p_4}^{a^4_2} + W_{3,p_4}^{c^4_3}$   & $W_{2,p_4}^{b^4_3} + W_{3,p_4}^{c^4_1}$   & $W_{2,p_4}^{b^4_5} + W_{3,p_4}^{c^4_1}$  \\  
$W_{2,p_4}^{b^4_7} + W_{1,p_4}^{a^4_4} + W_{3,p_4}^{c^4_5}$    & $W_{3,p_4}^{c^4_2} + W_{1,p_4}^{a^4_1}$   & $W_{3,p_4}^{c^4_4} + W_{1,p_4}^{a^4_1}$  \\ 

$ W_{3,p_4}^{c^4_6} +  W_{1,p_4}^{a^4_3} + W_{2,p_4}^{b^4_3}  $   & $W_{3,p_4}^{c^4_3} + W_{2,p_4}^{b^4_1}$   & $W_{3,p_4}^{c^4_5} + W_{2,p_4}^{b^4_1}$  \\ 
$ W_{3,p_4}^{c^4_7} +  W_{1,p_4}^{a^4_5} + W_{2,p_4}^{b^4_5}   $  & $W_{2,p_4}^{b^4_8} + W_{1,p_4}^{a^4_4} + W_{3,p_4}^{c^4_5}$  & $W_{2,p_4}^{b^4_9} + W_{1,p_4}^{a^4_2} + W_{3,p_4}^{c^4_3}$ \\

 & $ W_{3,p_4}^{c^4_8} +  W_{1,p_4}^{a^4_5} + W_{2,p_4}^{b^4_5}   $  & $W_{3,p_4}^{c^4_9} +  W_{1,p_4}^{a^4_3} + W_{2,p_4}^{b^4_3}  $ \\

 \hline

\end{tabular}}
\subcaption{For $j=4$}
\end{subtable}}
\end{table}

For each $j \in [5]\backslash \mathcal{B}=\{3,5\}$, the users generate 
\begin{itemize}
\item a one-to-one mapping from $\{ 1, 2, 3 \}$ to $\mathcal{B}$, say $\rho = (\rho^1, \rho^2, \rho^3)$, such that
$d_{\rho^i}=i$ for $i=d_j$, and $d_{\rho^i} \neq i$ for all $i \in \{ 1,2, 3\}$ and $i \neq d_j$. For this example, 
$\rho_3 = (2,1,4)$ for $j=3$, and $\rho_5 = (1,4,2)$ for $j=5$.

\item three random permutations of the set $\{1,2, \ldots, 9\}$ as
$$
a^j \rightarrow (a^j_1, a^j_2, \ldots, a^j_9), 
b^j \rightarrow (b^j_1, b^j_2, \ldots, b^j_9), 
c^j \rightarrow (c^j_1, c^j_2, \ldots, c^j_9) .
$$
\end{itemize}

By running Algorithm \ref{algo3}, we get the queries shown in Table \ref{tb10:ex2} for $j \in \{3,5\}$.

\begin{table}
\caption{Query tables for Example \ref{ex4.2} \label{tb10:ex2}}
\center{
\begin{subtable}[c]{0.7\textwidth}
\centering
\resizebox{\columnwidth}{!}{%
\begin{tabular}[\textwidth]{ |c|c|c| } 
 \hline
 DB1 & DB2 & DB3  \\
\hline 
$W_{1,p_2}^{a^3_1} + W_{1,p_3}^{a^3_1}$   & $W_{1,p_2}^{a^3_2} + W_{1,p_3}^{a^3_2} + W_{2,p_1}^{b^3_1}+ W_{2,p_3}^{b^3_1}$   & $W_{1,p_2}^{a^3_4} + W_{1,p_3}^{a^3_4} + W_{2,p_1}^{b^3_1}+ W_{2,p_3}^{b^3_1}$  \\   

$W_{2,p_1}^{b^3_1}+ W_{2,p_3}^{b^3_1}$   & $W_{1,p_2}^{a^3_3} + W_{1,p_3}^{a^3_3}+ W_{3,p_4}^{c^3_1} + W_{3,p_3}^{c^3_1}$   & $W_{1,p_2}^{a^3_5} + W_{1,p_3}^{a^3_5} + W_{3,p_4}^{c^3_1}+ W_{3,p_3}^{c^3_1}$  \\ 

$W_{3,p_4}^{c^3_1}+W_{3,p_3}^{c^3_1}$   & $W_{2,p_1}^{b^3_2} + W_{2,p_3}^{b^3_2} + W_{1,p_2}^{a^3_1}+ W_{1,p_3}^{a^3_1}$   & $W_{2,p_1}^{b^3_4} +W_{2,p_3}^{b^3_4} + W_{1,p_2}^{a^3_1}+ W_{1,p_3}^{a^3_1}$  \\

\makecell{$ W_{1,p_2}^{a^3_6} + W_{1,p_3}^{a^3_6} + W_{2,p_1}^{b^3_2}$ \\$ + W_{2,p_3}^{b^3_2} + W_{3,p_4}^{c^3_2}+ W_{3,p_3}^{c^3_2}$ \\  $ $}   & $W_{2,p_1}^{b^3_3} + W_{2,p_3}^{b^3_3} + W_{3,p_4}^{c^3_1} + W_{3,p_3}^{c^3_1}$   & $W_{2,p_1}^{b^3_5} +W_{2,p_3}^{b^3_5} + W_{3,p_4}^{c^3_1}+ W_{3,p_3}^{c^3_1}$  \\  

\makecell{$ W_{1,p_2}^{a^3_7} + W_{1,p_3}^{a^3_7} + W_{2,p_1}^{b^3_3}$ \\$ + W_{2,p_3}^{b^3_3} + W_{3,p_4}^{c^3_3}+ W_{3,p_3}^{c^3_3}$\\  $ $ }  & $W_{3,p_4}^{c^3_2} + W_{3,p_3}^{c^3_2} + W_{1,p_2}^{a^3_1}+ W_{1,p_3}^{a^3_1}$   & $W_{3,p_4}^{c^3_4} +W_{3,p_3}^{c^3_4} + W_{1,p_2}^{a^3_1}+ W_{1,p_3}^{a^3_1}$  \\ 

\makecell{$ W_{2,p_1}^{b^3_6} + W_{2,p_3}^{b^3_6} + W_{1,p_2}^{a^3_2} $ \\$ + W_{1,p_3}^{a^3_2} + W_{3,p_4}^{c^3_4}+ W_{3,p_3}^{c^3_4}$ \\  $ $}    & $W_{3,p_4}^{c^3_3} +W_{3,p_3}^{c^3_3} + W_{2,p_1}^{b^3_1}+ W_{2,p_3}^{b^3_1}$   & $W_{3,p_4}^{c^3_5} +W_{3,p_3}^{c^3_5} + W_{2,p_1}^{b^3_1}+ W_{2,p_3}^{b^3_1}$  \\ 

\makecell{$ W_{2,p_1}^{b^3_7} + W_{2,p_3}^{b^3_7} + W_{1,p_2}^{a^3_3} $ \\$ + W_{1,p_3}^{a^3_3} + W_{3,p_4}^{c^3_5}+ W_{3,p_3}^{c^3_5}$ \\  $ $}     & \makecell{$ W_{1,p_2}^{a^3_8} + W_{1,p_3}^{a^3_8} + W_{2,p_1}^{b^3_4} + W_{2,p_3}^{b^3_4} $ \\$+ W_{3,p_4}^{c^3_4}+ W_{3,p_3}^{c^3_4}$ \\  $ $}      & \makecell{$ W_{1,p_2}^{a^3_9} + W_{1,p_3}^{a^3_9} + W_{2,p_1}^{b^3_2}+ W_{2,p_3}^{b^3_2} $ \\$ + W_{3,p_4}^{c^3_2}+ W_{3,p_3}^{c^3_2}$ \\  $ $} \\

\makecell{$ W_{3,p_4}^{c^3_6}+ W_{3,p_3}^{c^3_6} + W_{1,p_2}^{a^3_4} $ \\$ + W_{1,p_3}^{a^3_4} + W_{2,p_1}^{b^3_4}+ W_{2,p_3}^{b^3_4} $\\  $ $ }       &    \makecell{$ W_{2,p_1}^{b^3_8} + W_{2,p_3}^{b^3_8} + W_{1,p_2}^{a^3_4}  + W_{1,p_3}^{a^3_4} $ \\$+ W_{3,p_4}^{c^3_5}+ W_{3,p_3}^{c^3_5}$ \\  $ $}    &    \makecell{$ W_{2,p_1}^{b^3_9} + W_{2,p_3}^{b^3_9} + W_{1,p_2}^{a^3_2}  + W_{1,p_3}^{a^3_2}$ \\$ + W_{3,p_4}^{c^3_3}+ W_{3,p_3}^{c^3_3}$ \\  $ $}   \\

\makecell{$ W_{3,p_4}^{c^3_7}+ W_{3,p_3}^{c^3_7} + W_{1,p_2}^{a^3_5} $ \\$ + W_{1,p_3}^{a^3_5} + W_{2,p_1}^{b^3_5}+ W_{2,p_3}^{b^3_5} $ }   &     \makecell{$ W_{3,p_4}^{c^3_8}+ W_{3,p_3}^{c^3_8} + W_{1,p_2}^{a^3_5}  + W_{1,p_3}^{a^3_5} $ \\$+ W_{2,p_1}^{b^3_5}+ W_{2,p_3}^{b^3_5} $\\  $ $ }   &  \makecell{$ W_{3,p_4}^{c^3_9}+ W_{3,p_3}^{c^3_9} + W_{1,p_2}^{a^3_3} + W_{1,p_3}^{a^3_3} $ \\$ + W_{2,p_1}^{b^3_3}+ W_{2,p_3}^{b^3_3} $\\  $ $ }  \\

 \hline
\end{tabular}
}
\subcaption{For $j=3$}
\end{subtable}}
\\
\center{
\begin{subtable}[c]{0.7\textwidth}
\centering
\resizebox{\columnwidth}{!}{%
\begin{tabular}[\textwidth]{ |c|c|c| } 
 \hline
 DB1 & DB2 & DB3  \\
\hline 
$W_{1,p_1}^{a^5_1} + W_{1,p_5}^{a^5_1}$   & $W_{1,p_1}^{a^5_2} + W_{1,p_5}^{a^5_2} + W_{2,p_4}^{b^5_1}+ W_{2,p_5}^{b^5_1}$   & $W_{1,p_1}^{a^5_4} + W_{1,p_5}^{a^5_4} + W_{2,p_4}^{b^5_1}+ W_{2,p_5}^{b^5_1}$  \\   

$W_{2,p_4}^{b^5_1}+ W_{2,p_5}^{b^5_1}$   & $W_{1,p_1}^{a^5_3} + W_{1,p_5}^{a^5_3}+ W_{3,p_2}^{c^5_1} + W_{3,p_5}^{c^5_1}$   & $W_{1,p_1}^{a^5_5} + W_{1,p_5}^{a^5_5} + W_{3,p_2}^{c^5_1}+ W_{3,p_5}^{c^5_1}$  \\ 

$W_{3,p_2}^{c^5_1}+ W_{3,p_5}^{c^5_1}$   & $W_{2,p_4}^{b^5_2} + W_{2,p_5}^{b^5_2} + W_{1,p_1}^{a^5_1}+ W_{1,p_5}^{a^5_1}$   & $W_{2,p_4}^{b^5_4} + W_{2,p_5}^{b^5_4} + W_{1,p_1}^{a^5_1}+ W_{1,p_5}^{a^5_1}$  \\

\makecell{$ W_{1,p_1}^{a^5_6} + W_{1,p_5}^{a^5_6} + W_{2,p_4}^{b^5_2}$ \\$ + W_{2,p_5}^{b^5_2} + W_{3,p_2}^{c^5_2}+ W_{3,p_5}^{c^5_2}$ \\  $ $}   & $W_{2,p_4}^{b^5_3} + W_{2,p_5}^{b^5_3} + W_{3,p_2}^{c^5_1} + W_{3,p_5}^{c^5_1}$   & $W_{2,p_4}^{b^5_5} + W_{2,p_5}^{b^5_5} + W_{3,p_2}^{c^5_1}+ W_{3,p_5}^{c^5_1}$  \\  

\makecell{$ W_{1,p_1}^{a^5_7} + W_{1,p_5}^{a^5_7} + W_{2,p_4}^{b^5_3}$ \\$ + W_{2,p_5}^{b^5_3} + W_{3,p_2}^{c^5_3}+ W_{3,p_5}^{c^5_3}$\\  $ $ }  & $W_{3,p_2}^{c^5_2} + W_{3,p_5}^{c^5_2} + W_{1,p_1}^{a^5_1}+ W_{1,p_5}^{a^5_1}$   & $W_{3,p_2}^{c^5_4} + W_{3,p_5}^{c^5_4} + W_{1,p_1}^{a^5_1}+ W_{1,p_5}^{a^5_1}$  \\ 

\makecell{$ W_{2,p_4}^{b^5_6} + W_{2,p_5}^{b^5_6} + W_{1,p_1}^{a^5_2} $ \\$ + W_{1,p_5}^{a^5_2} + W_{3,p_2}^{c^5_4}+ W_{3,p_5}^{c^5_4}$ \\  $ $}    & $W_{3,p_2}^{c^5_3} + W_{3,p_5}^{c^5_3} + W_{2,p_4}^{b^5_1}+ W_{2,p_5}^{b^5_1}$   & $W_{3,p_2}^{c^5_5} + W_{3,p_5}^{c^5_5} + W_{2,p_4}^{b^5_1}+ W_{2,p_5}^{b^5_1}$  \\ 

\makecell{$ W_{2,p_4}^{b^5_7} + W_{2,p_5}^{b^5_7} + W_{1,p_1}^{a^5_3} $ \\$ + W_{1,p_5}^{a^5_3} + W_{3,p_2}^{c^5_5}+ W_{3,p_5}^{c^5_5}$ \\  $ $}     & \makecell{$ W_{1,p_1}^{a^5_8} + W_{1,p_5}^{a^5_8} + W_{2,p_4}^{b^5_4} + W_{2,p_5}^{b^5_4} $ \\$+ W_{3,p_2}^{c^5_4}+ W_{3,p_5}^{c^5_4}$ \\  $ $}      & \makecell{$ W_{1,p_1}^{a^5_9} + W_{1,p_5}^{a^5_9} + W_{2,p_4}^{b^5_2}+ W_{2,p_5}^{b^5_2} $ \\$ + W_{3,p_2}^{c^5_2}+ W_{3,p_5}^{c^5_2}$ \\  $ $} \\

\makecell{$ W_{3,p_2}^{c^5_6}+ W_{3,p_5}^{c^5_6} + W_{1,p_1}^{a^5_4} $ \\$ + W_{1,p_5}^{a^5_4} + W_{2,p_4}^{b^5_4}+ W_{2,p_5}^{b^5_4} $\\  $ $ }       &    \makecell{$ W_{2,p_4}^{b^5_8} + W_{2,p_5}^{b^5_8} + W_{1,p_1}^{a^5_4}  + W_{1,p_5}^{a^5_4} $ \\$+ W_{3,p_2}^{c^5_5}+ W_{3,p_5}^{c^5_5}$ \\  $ $}    &    \makecell{$ W_{2,p_4}^{b^5_9} + W_{2,p_5}^{b^5_9} + W_{1,p_1}^{a^5_2}  + W_{1,p_5}^{a^5_2}$ \\$ + W_{3,p_2}^{c^5_3}+ W_{3,p_5}^{c^5_3}$ \\  $ $}   \\

\makecell{$ W_{3,p_2}^{c^5_7}+ W_{3,p_5}^{c^5_7} + W_{1,p_1}^{a^5_5} $ \\$ + W_{1,p_5}^{a^5_5} + W_{2,p_4}^{b^5_5}+ W_{2,p_5}^{b^5_5} $ }   &     \makecell{$ W_{3,p_2}^{c^5_8}+ W_{3,p_5}^{c^5_8} + W_{1,p_1}^{a^5_5}  + W_{1,p_5}^{a^5_5} $ \\$+ W_{2,p_4}^{b^5_5}+ W_{2,p_5}^{b^5_5} $\\  $ $ }   &  \makecell{$ W_{3,p_2}^{c^5_9}+ W_{3,p_5}^{c^5_9} + W_{1,p_1}^{a^5_3} + W_{1,p_5}^{a^5_3} $ \\$ + W_{2,p_4}^{b^5_3}+ W_{2,p_5}^{b^5_3} $\\  $ $ }  \\

 \hline
\end{tabular}}
\subcaption{For $j=5$}
\end{subtable}}
\end{table}

From the Table \ref{tb7:ex2}, we get all the subsubfiles of the subfiles $W_{1,p_1}, W_{3,p_1}, W_{1,p_2}, W_{2,p_2},$ $W_{2, p_4}, W_{3,p_4}$ and first $5$ subsubfiles of the subfiles $W_{2,p_1}, W_{3,p_2}, W_{1,p_4}$. From the Table \ref{tb10:ex2}, we get all the subsubfiles of $W_{1,p_2}+ W_{1,p_3}, W_{2,p_1}+ W_{2,p_3},$ $W_{3, p_4} + W_{3,p_3}$, $W_{1, p_1} + W_{1,p_5}$, $W_{2, p_4} + W_{2,p_5}$ and $W_{3, p_2} + W_{3,p_5}$.

The demand of user $1$ is $W_2$ and user already got two subfiles $W_{2,p_2}, W_{2,p_4}$ completely and first $5$ subsubfiles of $W_{2, p_1}$. Since the cache content of user $1$ is
$Z_1=\{ W_{1,p_1}^t \oplus W_{2,p_1}^t \oplus W_{3,p_1}^t \ | \ t=6,7,8,9 \},$
remaining $4$ subsubfiles of $W_{2, p_1}$ can be obtained with the help of cache content $Z_1$. Hence user $1$ obtained $W_{2,p_2}, W_{2,p_4}$ and $W_{2, p_1}$ completely, and remaining two subfiles $W_{2,p_3}, W_{2,p_5}$, can be obtained from $W_{2,p_1}+ W_{2,p_3}$ and $W_{2, p_4} + W_{2,p_5}$, respectively. In a similar manner, user $2$ and user $4$ will get their desired files.

The demand of user $3$ is also $W_2$ and user $3$ already got two subfiles $W_{2,p_2}, W_{2,p_4}$ completely and first $5$ subsubfiles of $W_{2, p_1}$. User $3$ will obtain last $4$ subsubfiles of $W_{2,p_3}$ by using its cache content $Z_3$ and the subfiles $W_{1,p_3}, W_{3,p_3}$. Since the user have first $5$ subsubfiles of $W_{2, p_1}$ and last $4$ subsubfiles of $W_{2,p_3}$, using $W_{2,p_1}+ W_{2,p_3}$, the user will get both the subfiles $W_{2, p_1}$ and $W_{2, p_3}$ completely. Now user $3$ will obtain last remaining subfile $W_{2, p_5}$ using $W_{2, p_4} + W_{2,p_5}$. In a similar manner, user $5$ will get the desired file $W_3$. The rate of this scheme is $R=\frac{N}{K} \left( \frac{q}{S^{N-1}} + K-N \right) =\frac{41}{15} = 2.733$, while rate of the PD scheme for this example is $R' = 2.926$.
\end{example}

\begin{theorem}
The proposed scheme is private.
\end{theorem}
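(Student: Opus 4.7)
The plan is to verify $I(\theta; Q_s, A_s, W_{1:N}, Z_{1:K}) = 0$ for every $s \in [S]$. Since the file contents $W_{1:N}$ are generated independently of $\theta$, and $A_s$ is a deterministic function of $Q_s$ and $W_{1:N}$, it suffices to show that the conditional distribution of $(Q_s, Z_{1:K})$ given $W_{1:N}$ is independent of $\theta$. The overall strategy is, for any two demand vectors $\theta$ and $\theta'$, to exhibit a measure-preserving bijection on the space of internal randomness -- namely the placement permutation $P$ and the per-user permutations $\{P^{\lambda i}\}_{\lambda \in [K], i \in [N]}$ -- that maps query realizations under $\theta$ to realizations under $\theta'$ with identical view at database $s$.

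First I would isolate the \emph{structural} content of $Q_s$, i.e.\ the multiset of supports (which subfiles and which users appear in each $k$-sum). For each call of Function~1 the correctness analysis already establishes that the collection $\mathcal{C}$ contains every $k$-subset of $[N]$ exactly $\psi_s(S,k)$ times; likewise, for each call of Function~2, every $k$-subset appears $k\phi_s(S,k)$ times. Both counts are independent of the demands $d_\lambda$ and $d_c$, so the multiset of structural types of queries sent to database $s$ has a distribution that does not depend on $\theta$. The placement broadcast $Z$ itself is also independent of $\theta$ because $q$ and $H$ are purely scheme parameters.

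Next I would argue that, given the structural types, the subsubfile indices inside each query are distributed independently of $\theta$. The permutations $P^{\lambda i}$ are chosen uniformly over $S_{S^{N-1}}$; for $\lambda \in \mathcal{B}$ the tail of $P^{\lambda d_\lambda}$ is fixed to $(H+1,\ldots,S^{N-1})$, but this constraint involves only the \emph{shape} of the permutation, not the identity of $d_\lambda$. Hence relabelling the files (together with the corresponding relabelling of $P^{\lambda i}$'s) gives a bijection that converts permutations consistent with $\theta$ into permutations consistent with $\theta'$ and leaves the realised indices in $Q_s$ equidistributed.

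The main obstacle will be Step~3: in the $N<K$ case the base set $\mathcal{B}$ and the one-to-one mapping $\rho$ used in Function~2 both depend on $\theta$. I would dispose of this by marginalising over the placement permutation $P$. Because $P$ is uniform on $S_K$, the tuple $(p_1,\ldots,p_K)$ is exchangeable across users, so any demand-dependent selection of $\mathcal{B}\subseteq [K]$ and of the pairing $\rho : [N] \to \mathcal{B}$ induces, after averaging over $P$, the same marginal distribution on the cache-position pairs $(p_{\rho^i}, p_c)$ that appear inside the $\omega_i(x) = W_{i,p_{\rho^i}}^x + W_{i,p_c}^x$ terms. This integrates out the $\theta$-dependence of $\mathcal{B}$ and $\rho$; combined with Steps~1 and~2 the joint distribution of $(Q_s, Z_{1:K})$ becomes invariant under $\theta$, which is the required privacy condition.
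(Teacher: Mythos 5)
Your overall architecture matches the paper's: both arguments rest on (i) the chain-rule reduction of $I(\theta;Q_s,A_s,W_{1:N},Z_{1:K})$ to the demand-independence of the query distribution, (ii) the structural symmetry of each query set coming from the fact that every $k$-subset of $[N]$ occurs a demand-independent number of times ($\psi_s(S,k)$ in Function 1, $k\phi_s(S,k)$ in Function 2), (iii) the uniform random permutations $P^{\lambda i}$ hiding which subsubfile indices are new, and (iv) the secrecy of the placement permutation $P$ hiding which user owns which cache segment. The paper additionally writes out $P(\bm{Q_s}=Q_s \mid \bm{\theta}=\theta)$ explicitly in the $N=K$ case, using the fact that exactly $S^{N-1}-H$ subsubfiles of \emph{every} file are absent from $Q_s$ because the demands are distinct; your relabelling bijection is a clean repackaging of the same observation.

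The one place where your plan, as written, would not close is Step 3. Privacy requires the \emph{joint} law of everything database $s$ observes to be independent of $\theta$, but you only argue that averaging over $P$ equalizes the \emph{marginal} distribution of each cache-position pair $(p_{\rho^i},p_c)$. The database in fact sees, for every $c\notin\mathcal{B}$, the entire matching $i\mapsto p_{\rho_c^i}$, and for two users $c_1,c_2\notin\mathcal{B}$ the event $p_{\rho_{c_1}^i}=p_{\rho_{c_2}^i}$ is equivalent to $\rho_{c_1}^i=\rho_{c_2}^i$, an event that is invariant under any relabelling by $P$. Since the scheme forces $\rho_{c_1}^i=\rho_{c_2}^i$ at $i=d_{c_1}$ when $d_{c_1}=d_{c_2}$, and forces disagreement at $i=d_{c_1}$ and $i=d_{c_2}$ when $d_{c_1}\neq d_{c_2}$, the cross-user agreement pattern is correlated with the demand vector and is not washed out by the uniformity of $P$. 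Closing this point needs an argument about the distribution of the maps $\rho_c$ themselves, not just about $P$; the paper's own justification here (that $\rho$ is a bijection onto $\mathcal{B}$, so each base position is paired with each $c\notin\mathcal{B}$) is likewise a single-user symmetry statement and comparably terse, so you are not worse off than the paper, but exchangeability of $P$ alone does not deliver the claimed invariance.
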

\begin{proof}
The query set for database $s \in [S]$ is
$Q_s=\cup_{j=1}^K Q_s^j,$
where $Q_s^j$ is either generated by Function 1 or Function 2 for all $j \in [K]$. Let $\overline{W}_i$ represent the collection of all subsubfiles of file $W_i$ which appeared in $Q_s$ for all $ i \in [N]$.
\begin{itemize}
\item \textbf{For $N=K$:} In Algorithm \ref{algo2}, Function 1 is used to generate the query sets $Q_s^j$ for all $j\in [K]$ and $s \in [S]$ in which the query structure is symmetric along all the files $W_i, i \in [N]$ as the query structure is generated with the help of the collection $\mathcal{C} = \{ U\in [N] \ | \ |U| =k \}$ used $\psi_s(S,k)$ times, where $k \in [N]$. Also, the number of repetitions of subsubfiles of every file $W_i$ is also symmetric in $Q_s^j$ for all $j \in [K]$. Therefore, the query set $Q_s$ can be completely determined by the collections $\overline{W}_i$, $i \in [N]$.

Let $\theta=(d_1,d_2,\ldots,d_K)$ be the demand vector. In the proposed scheme, for $j \in [K],$ while constructing $Q_s^j$, we have taken $N$ permutations of $[S^{N-1}]$,
$
P^{ji} = (p^i_1, p^i_2, \ldots, p^i_{S^{N-1}} )$ for all  $i \in[N] $
such that
$(p^{d_{j}}_{H+1}, p^{d_{j}}_{H+2}, \ldots, p^{d_{j}}_{S^{N-1}}) = (H+1, H+2, \ldots, S^{N-1}).$
Hence there are exactly $(S^{N-1}-H)$ subsubfiles of $W_{d_j}$ which will not appear in $Q_{s}^j$. Since this is true for each $j \in [K]$ and $Q_s=\cup_{j=1}^K Q_s^j$, there are exactly $(S^{N-1}-H)$ subsubfiles of $W_{d_j}$ which will never appear in $Q_s$. As we have assumed that all the users have distinct demands, we have $\{ d_1, d_2, \ldots, d_K \}=[N]$. That means $(S^{N-1}-H)$ subsubfiles of every file $W_{i}, i \in [N]$ will never appear in $Q_s$ for every $s \in [S]$. This is true for any demand vector $\theta$ and the probability,
$$P(\bm{Q_s}=Q_s \ | \ \bm{\theta}=\theta) = \frac{1}{[ (\mathcal{P}_{\cancel{d}})^{N-1} (\mathcal{P}_d) ] ^K}$$
where $\mathcal{P}_{\cancel{d}}=S^{N-1} (S^{N-1}-1) \cdots (S^{N-1}-t+1)$ and $\mathcal{P}_d =(S^{N-1}-(S^{N-1}-H)) (S^{N-1} - (S^{N-1} -H) -1) \cdots (S^{N-1} - (S^{N-1}-H) -t+1) =H(H-1)(H-2)\cdots (H-t+1),$
and $t$ denotes the number of subsubfiles of a file $W_i$ that appeared in $Q^{\lambda}_s$ generated as an output of Function 1 for some $\lambda \in [K]$. The probability $P(\bm{Q_s}=Q_s \ | \ \bm{\theta}=\theta)$ is independent of the choice of demand vector $\theta$. Therefore, $P(\bm{Q_s}=Q_s \ | \ \bm{\theta}=\theta)=P(\bm{Q_s}=Q_s)$.

\begin{note*}
In the expression of $P(\bm{Q_s}=Q_s \ | \ \bm{\theta}=\theta)$, $H$ should be greater than $t$. The total number of subsubfiles of a file $W_i$ appeared in all $Q_s^j$, $s\in [S]$ is
$$\sum_{s=1}^S \sum_{k=1}^N \mathcal{F}_s^{i,d_j}(k) = \begin{cases} S^{N-1} & \text{for}\ i \neq d_j \\ H & \text{for} \ i =d_j \end{cases}.$$
Therefore, $t \leq \sum_{s=1}^S \sum_{k=1}^N \mathcal{F}_s^{i,d_j}(k) \leq H$.
\end{note*}

Since $P(\bm{Q_s}=Q_s \ | \ \bm{\theta}=\theta)=P(\bm{Q_s}=Q_s)$, we have
$I(\theta  ; Q_s) = P(Q_S) - P(Q_s \ | \ \theta) =0.$
To prove the privacy of demand with respect to the databases, we need to prove that 
$$I(\theta; Q_s, A_s, W_{1:N}, Z_{1:K}) =0,$$
where $Q_s$ is the query set for database $s,$  
$A_s$ is the set of answers to the query set $Q_s,$
$W_{1:N}$ the collection of all $N$ files, and   
$Z_{1:K}$ is the cache content of all $K$ users.
As we have proved, $I(\theta ; Q_s )=0$. Now, $I(\theta ; W_{1:N} \ | \ Q_s) = 0$ because $\theta$ and $Q_s$ are generated by users without using the knowledge of files $W_{1:N}$, $I(\theta ; A_s \ | \ Q_s, W_{1:N}) =0$  because $A_s$ is a deterministic function of $Q_s$ and $W_{1:N}$, and $I(\theta; Z_{1:K} \ | \ Q_s, W_{1:N}, A_s)=0$ because the demand vector is decided by the users independent of $Z_{1:K}$ and databases do not know which user has which part of whole cache content $\cup_{j=1}^K Z_j$. Therefore,
\begin{multline*}
I(\theta; Q_s, A_s, W_{1:N}, Z_{1:K}) = I(\theta ; Q_s ) + I(\theta ; W_{1:N} \ | \ Q_s) + I(\theta ; A_s \ | \ Q_s, W_{1:N}) \\ + I(\theta; Z_{1:K} \ | \ Q_s, W_{1:N}, A_s)=0. 
\end{multline*}

\item \textbf{For $N<K$:} In this case, Function 1 is used for each $j\in \mathcal{B}$ ($\mathcal{B}$ is defined in Subsection \ref{subsec4.4}) and Function 2 is used for each $j \in [K] \backslash \mathcal{B}$. Privacy is maintained in all $Q_s^j, j\in \mathcal{B}$, which can be proved in the same way as in the above case. For each $j \in [K] \backslash \mathcal{B}$, Function 2 is called with parameters $W_{i,p_{\rho^i}}^x + W_{i,p_j}^x \longrightarrow \omega_i(x), \forall i \in[N] $ and $ P^{ji} \longrightarrow a^i, \forall i \in[N]$,
where all $P^{ji}$'s are random permutations of $[S^{N-1}]$. Since the input of Function 2 is not dependent on the demand vector and all $P^{ji}$'s are random permutations, query sets $Q_s^j, j \in [K] \backslash \mathcal{B}$ are symmetric along all the files $W_i, i \in [N]$. Further, the mapping $\rho$ is a one-to-one map from $[N]$ to $\mathcal{B}$, which ensures that the subfile $W_{i,p_j}$ is associated with each $j' \in \mathcal{B}$. Hence databases will not be able to determine which two users have the same demand.

\end{itemize}

\end{proof}

\section{Comparison of the rate of the proposed scheme with the scheme in \cite{ZWSJC2021}} \label{sec4}

In this section, we compare the rate of the proposed scheme to the rate of PD scheme. The proposed scheme is given for one memory point $M \in (0,N/K)$, which is close to zero. In Figures \ref{figc1} and \ref{figc2}, we compare the rate for different number of databases, files and users, and then we mathematically prove that the memory-rate pair $(M,R)$ of the proposed scheme lies below the lower convex envelope of the points $(0,N)$ and $(tN/K, R_{PD}(tN/K)), t \in \{ 1,2, \ldots K \}$ achieved by PD scheme , where $R_{PD}(tN/K)$ denotes the rate of the PD scheme for $M=\frac{tN}{K}$.

\begin{figure}
\caption{Rate of the proposed scheme compared to the rate of the PD scheme\label{figc2}}
\centering
\includegraphics[scale=0.4]{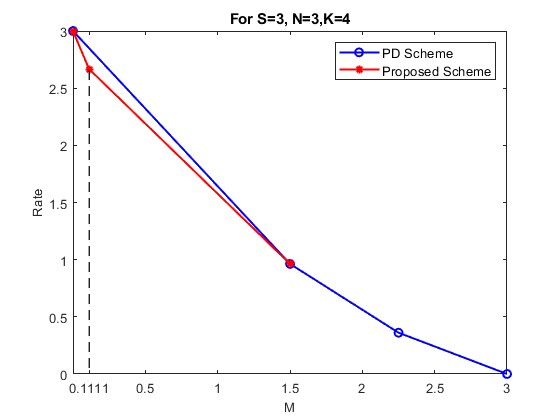}
\includegraphics[scale=0.4]{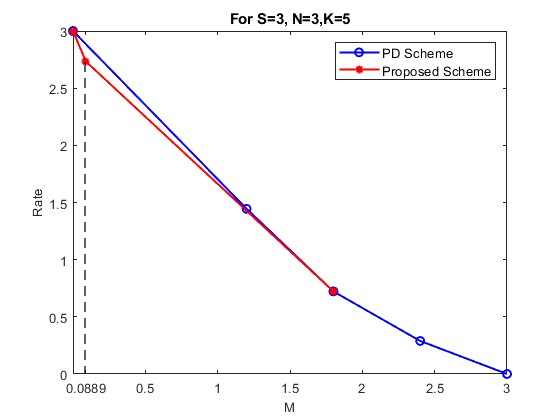}

\includegraphics[scale=0.4]{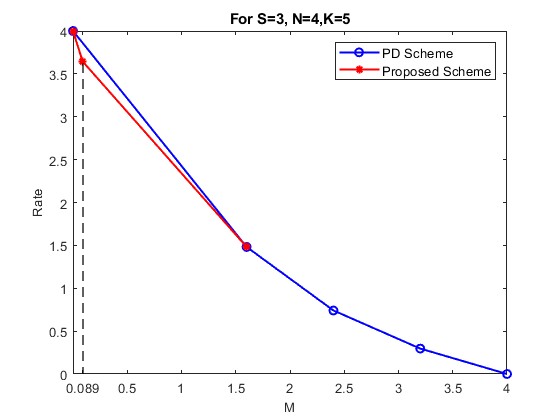}
\end{figure}

\begin{theorem}[\cite{ZWSJC2021}]
The product design achieves the load (rate) of 
$$R_{PD} = \min(N-M, \hat{R}(M))$$
in which 
$\hat{R}(M)= \frac{K-t}{t+1} \left( 1+ \frac{1}{S} + \frac{1}{S^2} + \cdots + \frac{1}{S^{N-1}} \right),$
where $t=\frac{KM}{N} \in [K]$. For non-integer values of $t$, the lower convex envelope of the integer points $(0, R_0)$ and $\left(\frac{tN}{K}, R_{PD}(M) \right)$, $t \in [K]$ are achievable, where $R_0$ is the rate achieved without cache.
\end{theorem}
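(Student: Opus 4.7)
The plan is to realize the product design scheme as a literal product of the MAN coded-caching scheme and the SJ PIR scheme, and then read off the rate. For the placement I would split each file $W_n$, $n\in[N]$, into $\binom{K}{t}$ equal subfiles $W_{n,\mathcal{S}}$ indexed by $t$-subsets $\mathcal{S}\subseteq[K]$, and let user $k$ cache every $W_{n,\mathcal{S}}$ with $k\in\mathcal{S}$. Each user then stores a fraction $t/K$ of every file, which uses $tN/K=M$ units of memory as required; since the placement is demand-oblivious, it carries no information about $\theta$.

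For the delivery phase I would invoke the standard MAN observation that, for a demand vector $\theta=(d_1,\dots,d_K)$, user $k$ can recover every missing subfile $W_{d_k,\mathcal{S}}$ with $k\notin\mathcal{S}$ from the coded symbols $X_{\mathcal{J}}=\bigoplus_{k\in\mathcal{J}}W_{d_k,\mathcal{J}\setminus\{k\}}$, one per $(t+1)$-subset $\mathcal{J}\subseteq[K]$. Transmitting the $X_{\mathcal{J}}$ in the clear would leak $\theta$, so for each $\mathcal{J}$ I would embed the retrieval into a separate SJ PIR instance whose virtual file family is $\{Y_{\mathcal{J},\bm{u}}=\bigoplus_{k\in\mathcal{J}}W_{u_k,\mathcal{J}\setminus\{k\}}:\bm{u}\in[N]^{t+1}\}$ and whose desired index is $\theta|_{\mathcal{J}}$. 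Correctness follows from SJ correctness composed with MAN decoding, while database-$s$ privacy for the individual $\mathcal{J}$ follows from SJ's per-database privacy.

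The rate count is then straightforward: there are $\binom{K}{t+1}$ values of $\mathcal{J}$, each subfile has size $L/\binom{K}{t}$, and each embedded SJ retrieval inflates this by the factor $1+\tfrac{1}{S}+\cdots+\tfrac{1}{S^{N-1}}$, giving
\[
R=\frac{\binom{K}{t+1}}{\binom{K}{t}}\left(1+\frac{1}{S}+\cdots+\frac{1}{S^{N-1}}\right)=\frac{K-t}{t+1}\left(1+\frac{1}{S}+\cdots+\frac{1}{S^{N-1}}\right)=\hat{R}(M).
\]
The alternative bound $N-M$ comes from the trivial strategy of transmitting every uncached part of every file, which is private because its transcript is demand-independent, so the $\min$ just picks whichever is smaller. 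Non-integer values of $t$ are then handled by standard memory-sharing between the integer corner points $(0,N)$ and $(tN/K,\hat{R}(tN/K))$, $t\in[K]$. The step I expect to be the main obstacle is the \emph{joint} privacy argument: although each of the $\binom{K}{t+1}$ SJ sub-instances is individually private, one must still verify that the concatenation of their per-database transcripts, with independent randomness across $\mathcal{J}$, is jointly independent of $\theta$; this requires a careful product-measure bookkeeping rather than a direct appeal to SJ's single-instance privacy.
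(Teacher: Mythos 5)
First, a point of reference: the paper you are working from does not prove this theorem at all. It is imported verbatim from \cite{ZWSJC2021} and used only as a benchmark in Section \ref{sec4}, so there is no in-paper proof to compare against; your attempt has to be judged against the actual product-design construction of Zhang et al. Against that standard, your placement, the count $\binom{K}{t+1}/\binom{K}{t}=\frac{K-t}{t+1}$, and the memory-sharing step are all fine, but the central delivery step is not the product design and does not deliver the stated rate.

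The gap is concrete. You embed, for each $(t+1)$-subset $\mathcal{J}$, a separate SJ instance whose message set is $\{Y_{\mathcal{J},\bm{u}} : \bm{u}\in[N]^{t+1}\}$. That is a PIR instance with $N^{t+1}$ messages, so a black-box appeal to SJ gives a download inflation factor of $1+\tfrac{1}{S}+\cdots+\tfrac{1}{S^{N^{t+1}-1}}$, not the $N$-term factor $1+\tfrac{1}{S}+\cdots+\tfrac{1}{S^{N-1}}$ appearing in $\hat{R}(M)$; for every $t\geq 1$ and $N\geq 2$ this is strictly larger, so the rate you write down does not follow from the scheme you describe. Moreover, the $Y_{\mathcal{J},\bm{u}}$ are $N^{t+1}$ XORs drawn from only $(t+1)N$ underlying subfiles and satisfy nontrivial linear relations (already for $N=t+1=2$ the four virtual files sum to zero), whereas SJ's correctness and privacy guarantees are proved for independent messages, so even the corrected inflation factor is not obtained by a black-box reduction. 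The actual product design sidesteps both issues: it runs the SJ query structure over the $N$ \emph{real} files (hence the subpacketization $\binom{K}{t}S^{N}$, each subfile split into $S^{N}$ sub-subfiles) and substitutes, inside each SJ query slot for file $n$, the MAN-type XOR over the users of $\mathcal{J}$ of sub-subfiles of their demanded files; the \emph{product} is taken at the level of the query table, not by instantiating one independent PIR problem per multicast message, and that is precisely what yields an $N$-term factor and where the joint-privacy argument you correctly flag as delicate actually lives. A smaller issue: your justification of the $N-M$ branch requires all users to cache an identical $M/N$ fraction of every file; under the MAN placement you set up, the union of the users' uncached parts is the entire library, so \tqt{transmit every uncached part} does not cost $N-M$ there. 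The two branches of the $\min$ correspond to two different placements.
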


In our scheme $M \in (0, N/K)$. For comparison, we find the rate $R_{PD}$ of the PD for the value of $M$ proposed in the given scheme. The rate of the PD scheme for $M=tN/K, t \in \{ 1,2, \ldots, K \}$ is 
 \begin{equation}\label{PD}
R_{PD}(tN/K)=\min(N(1- t/K), \hat{R}(tN/K)),
\end{equation}
where $\hat{R}(tN/K)=\frac{K-t}{t+1} \left( 1+ \frac{1}{S} + \frac{1}{S^2} + \cdots + \frac{1}{S^{N-1}} \right)$.

\begin{lemma}\label{lemma4.1}
For the defined integer $q$ in \eqref{q}, following ineqaulity holds
$$NS^{N-1}-q > 0$$
for all $S\geq 2$ and $N\geq 2$.
\end{lemma}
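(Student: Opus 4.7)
The plan is to produce a clean upper bound
$$q \;\leq\; NS^{N-1} - (S-1)^{N-1}$$
by exploiting the structure of the ceiling in $\psi_s(S,k) = \lceil \frac{k(N-1)}{N}\phi_s(S,k)\rceil$; the claim then follows immediately since $(S-1)^{N-1} \geq 1$ whenever $S \geq 2$ and $N \geq 2$.

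First I would record two algebraic identities. Summing the closed forms of $g(S,k)$ and $f(S,k)$ given earlier causes the $(-1)^{k-1}$ and $(-1)^{k}$ contributions to cancel and the $(S-1)^{k-2}$ and $(S-1)^{k-1}$ contributions to combine, yielding
$$\sum_{s=1}^{S} \phi_s(S,k) \;=\; g(S,k) + (S-1)f(S,k) \;=\; (S-1)^{k-1}.$$
Coupled with the binomial identity $\sum_{k=1}^{N} k\binom{N}{k}(S-1)^{k-1} = NS^{N-1}$ (obtained from $k\binom{N}{k} = N\binom{N-1}{k-1}$ and the binomial theorem), this is all the algebra the rest of the argument needs.

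Next I would bound the ceiling. Since $\frac{k(N-1)}{N}\phi_s(S,k) \leq k\phi_s(S,k)$ and $k\phi_s(S,k)$ is a non-negative integer, the inequality $\psi_s(S,k) \leq k\phi_s(S,k)$ holds for every pair $(s,k)$. Crucially, at the top layer $k = N$ the argument of the ceiling simplifies to $(N-1)\phi_s(S,N)$, which is already an integer, so in that slice $\psi_s(S,N) = (N-1)\phi_s(S,N) = k\phi_s(S,N) - \phi_s(S,N)$ exactly. Weighting these bounds by $\binom{N}{k}$, summing over $s$ and $k$, and invoking the two identities in the previous paragraph isolates the $k=N$ deficit and produces $q \leq NS^{N-1} - (S-1)^{N-1}$, which is the desired inequality.

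The only non-routine step is choosing the right relaxation of the ceiling: the naive $\lceil x\rceil \leq x + 1$ accumulates too much error once multiplied by $\binom{N}{k}$ and summed, so the cruder bound cannot reach strictly below $NS^{N-1}$. The sharper $\psi_s \leq k\phi_s$, together with the exact evaluation at $k = N$, is what allows the deficit to be collected into the clean remainder $(S-1)^{N-1}$; everything else is bookkeeping.
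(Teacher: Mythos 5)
Your proof is correct, and it takes a genuinely different route from the paper's. The paper bounds the ceiling by $\lceil x\rceil \le x+1$, which forces it to first subtract off the handful of $(s,k)$ pairs where the ceiling is exact (e.g.\ $\phi_s(S,1)=0$ for $s\ge 2$, $\phi_1(S,2)=0$, and the integer case $k=N$), arrive at the bound $q < (N-1)S^{N-1}+S(2^N-1)-N(S-1)-\binom{N}{2}-S$, and then run a case analysis over $S\ge 4$, $S=3$, $S=2$, $N=2$ with further sub-cases on $N$, occupying the whole of Appendix B. Your relaxation $\psi_s(S,k)=\lceil \tfrac{k(N-1)}{N}\phi_s(S,k)\rceil \le k\phi_s(S,k)$ (valid because $k\phi_s(S,k)$ is a non-negative integer dominating the argument of the ceiling --- in fact the paper already proves exactly this inequality in the correctness argument for Function~1, just never applies it here), combined with the identities $\sum_s\phi_s(S,k)=(S-1)^{k-1}$ and $\sum_k k\binom{N}{k}(S-1)^{k-1}=NS^{N-1}$ and the exact evaluation $\psi_s(S,N)=(N-1)\phi_s(S,N)$, yields the uniform bound $q\le NS^{N-1}-(S-1)^{N-1}$ with no case analysis at all, and is strictly sharper than what the paper establishes. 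The only small gap worth closing explicitly is the observation that each $\phi_s(S,k)$ is a non-negative integer; this follows immediately from the defining recurrences $g(S,k)=(S-1)f(S,k-1)$, $f(S,k)=(S-2)f(S,k-1)+g(S,k-1)$ with integer non-negative initial data (or simply from its role as a repetition count), and is needed both for the monotonicity $\tfrac{k(N-1)}{N}\phi_s\le k\phi_s$ and for pulling the ceiling off an integer. With that noted, your argument is complete and is a clear improvement on the paper's proof.
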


\begin{proof}
Given in Appendix \ref{asec2}.
\end{proof}

\begin{lemma}\label{lemma4.2}
The rate memory pair $(M,R)$ of proposed scheme, where 
$$M=\frac{NS^{N-1}-q}{KS^{N-1}} \quad  \text{and} 
 \quad R=\begin{cases}
\frac{q}{S^{N-1}} & \text{for} \ N=K \\
\frac{N}{K} \left( \frac{q}{S^{N-1}} +K-N \right) & \text{for} \ N<K 
\end{cases},$$
lies below the line passing through the points $(0,N)$ and $\left( \frac{tN}{K}, R_{PD}\left( \frac{tN}{K} \right) \right)$ for each $t \in \{1,2,\ldots, K\}$.
\end{lemma}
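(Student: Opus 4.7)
My plan begins with the observation that in both regimes of the scheme, a direct algebraic simplification yields the identity $N - R = NM$. Indeed, for $N=K$ we have $NM = N\cdot\tfrac{NS^{N-1}-q}{NS^{N-1}} = N - \tfrac{q}{S^{N-1}} = N - R$, and for $N<K$ we have $N-R = \tfrac{N^{2}}{K} - \tfrac{Nq}{KS^{N-1}} = \tfrac{N(NS^{N-1}-q)}{KS^{N-1}} = NM$. Geometrically this places $(M,R)$ exactly on the line $y = N(1-x)$ of slope $-N$ through $(0,N)$. Consequently, the condition that $(M,R)$ lies at or below the chord joining $(0,N)$ with $\left(\tfrac{tN}{K}, R_{PD}\!\left(\tfrac{tN}{K}\right)\right)$ reduces, by comparing slopes, to the single scalar inequality
\begin{equation*}
R_{PD}\!\left(\tfrac{tN}{K}\right) \;\geq\; N\!\left(1 - \tfrac{tN}{K}\right), \qquad t \in \{1, 2, \ldots, K\}.
\end{equation*}

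I would then split on the sign of $K - tN$. When $tN \geq K$ the right side is non-positive, and since $R_{PD} \geq 0$ the inequality is immediate. For $1 \leq t < K/N$, I invoke the definition $R_{PD}(tN/K) = \min\!\left(N - \tfrac{tN}{K},\; \tfrac{(K-t)\sigma}{t+1}\right)$, where $\sigma = 1 + S^{-1} + \cdots + S^{-(N-1)}$. The branch $N - tN/K \geq N - tN^{2}/K$ is immediate from $N \geq 1$, so the substantive task is the bound $\tfrac{(K-t)\sigma}{t+1} \geq \tfrac{N(K-tN)}{K}$.

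The main obstacle is this last inequality, and I would attack it by reducing to a polynomial bound. Using only $\sigma \geq 1$, it suffices to prove the algebraic claim
\begin{equation*}
g(t) \;:=\; K(K-t) - N(t+1)(K - tN) \;\geq\; 0
\end{equation*}
for every integer $t$ with $1 \leq t < K/N$. Viewing $g$ as a quadratic in $t$ with positive leading coefficient $N^{2}$ and constant term $K(K-N) \geq 0$, I would split on $K=N$ versus $K>N$. In the first case $g$ factors cleanly as $Nt(Nt-1)$, whose only positive root $1/N$ lies strictly below $1$, so every integer $t \geq 1$ gives $g(t) \geq 0$. In the second case, the substitution $K = N + u$ with $u \geq 1$ shows that the discriminant $\Delta = (K(N+1)-N^{2})^{2} - 4N^{2}K(K-N)$ of $g$ is strictly negative for every $N \geq 2$: the key observations are that at $u=1$ one computes $\Delta = -4N^{3} + 4N + 1 \leq -23$, and that the coefficients of $u$ and $u^{2}$ in the expansion of $\Delta(u)$ are both negative for $N \geq 2$, so $\Delta(u)$ is strictly decreasing in $u \geq 0$ and remains negative for all $u \geq 1$. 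Hence $g > 0$ on all of $\mathbb{R}$, and the desired inequality follows.
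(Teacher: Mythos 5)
Your proposal is correct, and it reorganizes the argument in a way that is cleaner than, though ultimately equivalent to, the paper's. The paper writes the chord as $R_{PD}=N-\frac{K}{tN}(N-R_t)M$, substitutes $M$, and computes $R_{PD}-R$ separately for $N=K$ and $N<K$, in each case factoring out $\bigl(N-\frac{q}{S^{N-1}}\bigr)$ (positive by Lemma \ref{lemma4.1}) and then verifying that the remaining factor --- which, after clearing denominators, is exactly your condition $R_t>N\bigl(1-\frac{tN}{K}\bigr)$ --- is positive via a case split and an algebraic regrouping such as $(K-Nt)(K-Nt-1)+(N-1)(K(t-1)+Nt)>0$. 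Your observation that $N-R=NM$ in \emph{both} regimes, so that $(M,R)$ sits on the line $y=N(1-x)$, unifies the two cases before any inequality work begins, and your discriminant computation $\Delta=-4N^3+4N+1<0$ at $u=1$ with $\Delta$ decreasing in $u$ replaces the paper's regrouping; both buy the same conclusion, but yours makes the structure (one scalar inequality about the PD endpoint, independent of which rate formula applies) more transparent. Two small points to tighten: the slope comparison is only valid because $M>0$, which is precisely Lemma \ref{lemma4.1} and should be cited at that step; and since the lemma asserts the point lies strictly below the chord, you should state the reduced condition as the strict inequality $R_{PD}\bigl(\frac{tN}{K}\bigr)>N\bigl(1-\frac{tN}{K}\bigr)$ --- your argument in fact delivers this, because for $t<K$ one has $K(K-t)\sigma>K(K-t)\ge N(t+1)(K-tN)$ using $\sigma>1$, the first branch satisfies $N-\frac{tN}{K}>N-\frac{tN^2}{K}$ strictly for $N\ge 2$, and for $t=K$ one has $R_{PD}=0>N(1-N)$.
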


\begin{proof}
The equation of line passing through the points $(0, N)$ and $(tN/K, R_{PD}(tN/K)), t \in \{1,2, \ldots, K \}$ (denote $R_{PD}(tN/K)$ by $R_{t}$), is
$$R_{PD}-N = \frac{R_{t}-N}{\frac{tN}{K}-0} (M-0) $$ which can be written as 
\begin{equation}\label{line}
R_{PD} =N - \frac{K}{tN} \left( N-R_{t} \right) M.
\end{equation}
In the proposed scheme, cache size is $M= \frac{NS^{N-1}-q}{KS^{N-1}}$ and rate of the PD scheme for $M$,
\begin{align*}
R_{PD} &= N- \frac{K}{tN} \left( N-R_{t} \right)  \left( \frac{NS^{N-1}-q}{KS^{N-1}} \right) \\
&= N- \frac{1}{t} \left(1-\frac{R_t}{N} \right) \left( N-\frac{q}{S^{N-1}} \right).
\end{align*}

Now we compare the rate of the proposed scheme for both of the following cases.

\begin{itemize}
\item \textbf{For $N=K$:} The rate of the proposed scheme is 
$R=\frac{q}{S^{N-1}}.$ We have
\begin{align*}
R_{PD}-R&=N- \frac{1}{t} \left(1-\frac{R_t}{N} \right) \left( N-\frac{q}{S^{N-1}} \right) -\frac{q}{S^{N-1}} \\
& =\left( 1-\frac{1}{t} \right)\left( N-\frac{q}{S^{N-1}} \right) +\frac{R_t}{t} \left( 1-\frac{q}{NS^{N-1}} \right) \\
&=\left( 1-\frac{1}{t} + \frac{R_t}{tN} \right) \left(N- \frac{q}{S^{N-1}} \right).
\end{align*}
Since $\left( 1-\frac{1}{t} + \frac{R_t}{tN} \right)$ is always a positive number, from Lemma \ref{lemma4.1}, we have $R_{PD}-R > 0$.

\item \textbf{For $N<K$:} The rate of the proposed scheme is 
$R=\frac{N}{K}\left[ \frac{q}{S^{N-1}} +K-N \right].$ We have

\begin{align*}
R_{PD}-R&=N- \frac{1}{t} \left(1-\frac{R_t}{N} \right) \left( N-\frac{q}{S^{N-1}} \right) -\frac{Nq}{KS^{N-1}} -N\left( 1-\frac{N}{K} \right) \\
&= \left(N-\frac{q}{S^{N-1}}\right)  \left( \frac{R_{t}}{tN} + \frac{N}{K} - \frac{1}{t} \right).
\end{align*}
Again from Lemma \ref{lemma4.2}, we have $\left(N-\frac{q}{S^{N-1}}\right)>0$. Therefore to prove that $R_{PD}-R>0$, we need to show that 
$\left( \frac{R_{t}}{tN} + \frac{N}{K} - \frac{1}{t} \right) >0.$
Since $R_t=\min \left(N\left( 1-\frac{t}{K} \right), \hat{R}(tN/K) \right)$, where $\hat{R}(tN/K)=\frac{K-t}{t+1}A$ and $A=\left(1+\frac{1}{S}+\cdots+\frac{1}{S^{N-1}}\right)$.
\begin{enumerate}
\item If $\hat{R}(tN/K) \geq N(1-t/K)$, then $R_{t}=N(1-t/K)$. We have
$$ \frac{R_{t}}{tN} + \frac{N}{K} - \frac{1}{t}   = \frac{N-1}{K} >0.$$

\item If $\hat{R}(tN/K) < N(1-t/K)$, then 
$R_{t}=\hat{R}(tN/K)  = \frac{K-t}{t+1} A $. Clearly, for $K<Nt$, we have $ \frac{R_{t}}{tN} + \frac{N}{K} - \frac{1}{t} >0$. Now for $K>Nt$, we have
\begin{align*}
 \frac{R_{t}}{tN} + \frac{N}{K} - \frac{1}{t} &=  \frac{1}{tN} \left( \frac{K-t}{t+1} A \right) + \frac{N}{K} - \frac{1}{t} \\ 
& >  \frac{K-t}{Nt(t+1)}+ \frac{N}{K} -\frac{1}{t}  \qquad \qquad (\text{as} \ A>1) \\
& = \frac{K(K-t)+N^2 t (t+1)- NK(t+1)}{NKt(t+1)}\\
&= \frac{(K-Nt)^2+ Kt(N-1) - N(K-Nt)}{NKt(t+1)} \\
&= \frac{(K-Nt)^2 - (K-Nt) -(N-1) (K-Nt) + Kt(N-1) }{NKt(t+1)}  \\
&= \frac{(K-Nt) (K-Nt-1) +(N-1) (K(t-1)+Nt) }{NKt(t+1)}  >0  .
\end{align*}
\end{enumerate}

\end{itemize}

\end{proof}

\begin{lemma}\label{lemma4.3}
Suppose $R(M)$ and $R_{PD}(M)$ denote the rates of proposed scheme and the rate of the PD scheme, respectively, at $M=\frac{NS^{N-1}-q}{KS^{N-1}}$, then
$$R_{PD}(M)-R(M)>0.$$
\end{lemma}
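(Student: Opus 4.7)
The plan is to combine Lemma~\ref{lemma4.2} with a simple geometric observation about the lower convex envelope that defines $R_{PD}(M)$. First, I would check that $M=\frac{NS^{N-1}-q}{KS^{N-1}}$ lies in the open interval $(0,N/K)$: positivity of the numerator is exactly Lemma~\ref{lemma4.1}, and the bound $M<N/K$ follows immediately from the fact that $q>0$, which is manifest from the definition \eqref{q}.

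Next, I would identify the leftmost vertex of the PD envelope. Using the PD formula at the corner point $M=0$ (that is, $t=0$), one has $\hat R(0)=K\bigl(1+1/S+\cdots+1/S^{N-1}\bigr)$ and $N-0=N$; because we are in the regime $N\le K$ and $1+1/S+\cdots+1/S^{N-1}\ge 1$, this minimum equals $N$. Hence $R_{PD}(0)=N$, so the point $(0,N)$ is the leftmost vertex of the lower convex envelope of the achievable corner points $\bigl\{(0,N)\bigr\}\cup\bigl\{(tN/K,R_{PD}(tN/K)):t\in[K]\bigr\}$.

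Let $\tau\in[K]$ be the parameter such that $(\tau N/K,R_{PD}(\tau N/K))$ is the next vertex of the lower convex envelope to the right of $(0,N)$. On the entire interval $[0,\tau N/K]$ the envelope $R_{PD}(\cdot)$ agrees with the straight line segment through these two vertices. Since the chosen $M$ satisfies $M<N/K\le \tau N/K$, the value $R_{PD}(M)$ equals exactly the height of that line at $M$. Applying Lemma~\ref{lemma4.2} with the particular choice $t=\tau$ gives $R(M)<R_{PD}(M)$, which is the required strict inequality $R_{PD}(M)-R(M)>0$.

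The only substantive step is the one that pins down which chord realizes the envelope near $M=0$; once one observes that the leftmost hull segment must emanate from $(0,N)$ and reach some vertex $(\tau N/K,R_{PD}(\tau N/K))$ with $\tau\ge 1$, the result is a direct invocation of Lemma~\ref{lemma4.2} at $t=\tau$. No arithmetic beyond what Lemmas~\ref{lemma4.1} and \ref{lemma4.2} already supply is needed; in particular, the strictness of the inequality in Lemma~\ref{lemma4.2} (verified in its proof through the positive factor $(1-1/t+R_t/(tN))$ for $N=K$ and $(R_t/(tN)+N/K-1/t)$ for $N<K$, together with $N-q/S^{N-1}>0$) transfers verbatim.
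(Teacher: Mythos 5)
Your proposal is correct and follows essentially the same route as the paper: Lemma~\ref{lemma4.3} is deduced directly from Lemma~\ref{lemma4.2} via the convex-envelope structure of $R_{PD}$. You simply spell out the geometric step the paper leaves implicit (that the envelope on $[0,\tau N/K]$ coincides with the chord from $(0,N)$ to its first vertex $\tau\ge 1$, so Lemma~\ref{lemma4.2} at $t=\tau$ gives the strict inequality), which is a faithful elaboration rather than a different argument.
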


\begin{proof}
From Lemma \ref{lemma4.2}, it is clear that the rate memory pair $(M,R)$ of proposed scheme lies below the lower convex envelope of the points $(0,N)$ and $\left( \frac{tN}{K}, R_{PD}\left( \frac{tN}{K} \right) \right), t \in \{1,2,\ldots, K\}$. Hence the proof.
\end{proof}

\section*{Acknowledgment}

This work was supported partly by the Science and Engineering Research Board (SERB) of the Department of Science and Technology (DST), Government of India, through J.C. Bose National Fellowship to Prof. B. Sundar Rajan and by IISc-IoE postdoctoral fellowship awarded to Charul Rajput. The authors would like to thank Mr. Kanishak Vaidya for the valuable discussions and for helping to get an overview of ongoing research on this topic.

\appendices

\section{ Proof of Correctness, Privacy and Achieving Capacity of PIR scheme given in Section \ref{sec2}}\label{asec1}

\begin{lemma}\label{lemma1}
The set $Q_j$ generated by Algorithm \ref{algo1} satisfies the following properties for each $j \in [S]$.
\begin{enumerate}
\item Each subfile $W_i(t), i \in [N], t \in [S^{N-1}]$ appears atmost once in $Q_j$. 

\item Forall $k \in [N]$, $k$-sums of each possible type appears exactly $\phi_j(S,k)$ times in $Q_j$. 

\item Total $\sum_{k=1}^N {N-1 \choose k-1} \phi_j (S,k)$ subfiles of a file $W_i$ appears in $Q_j$.

\end{enumerate}
\end{lemma}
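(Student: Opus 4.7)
I would prove the three properties by a joint induction on $k$, the size of the sums being produced in the outer loop of Algorithm~\ref{algo1}. The algorithm naturally proceeds in $k$-layers, so each invariant can be maintained as $k$ grows from $1$ to $N$.

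For property~(1), the key observation is that whenever Algorithm~\ref{algo1} inserts a fresh subfile into some $Q_j$, it does so in either \eqref{add1} or \eqref{add2}, and in both places the subscript of the permutation $a^i$ is advanced by the increment operator $\mathord{++}t_i$. Since $t_i$ is strictly monotone and $a^i$ is a permutation of $[S^{N-1}]$, no index $a^i_\ell$ is ever produced twice across all databases, so in particular no subfile $W_i(t)$ is repeated in a single $Q_j$.

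For property~(2), I use induction on $k$. The base cases $k=1$ and $k=2$ are immediate from the initialisation and the first execution of the main loop, matching $g(S,1)=1$, $f(S,1)=0$, $g(S,2)=0$, $f(S,2)=1$. For the inductive step, suppose every $(k-1)$-sum in $Q_j$ appears with multiplicity $g(S,k-1)$ if $j=1$ and $f(S,k-1)$ if $j\geq 2$. Layer $k$ of the algorithm iterates over ordered pairs $(i,j)$ with $i\neq j$: for each $(k-1)$-sum $Q\in Q_i$ whose index set avoids $d$, it appends a new subfile of $W_d$ to form a $k$-sum of $Q_j$ in \eqref{add1}, and then fills in the remaining $k$-sums via \eqref{add2}. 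Counting the number of such feeding $(k-1)$-sums gives
\[
g(S,k)=(S-1)f(S,k-1), \qquad f(S,k)=(S-2)f(S,k-1)+g(S,k-1),
\]
as explained in the paragraph preceding Table~\ref{tb2}; the symmetry-maintenance inner loop then equalises the multiplicity of the filler $k$-sums to match, so every $k$-sum occurs exactly $\phi_j(S,k)$ times in $Q_j$.

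Property~(3) is an immediate consequence of (1) and (2): a fixed file $W_i$ belongs to exactly $\binom{N-1}{k-1}$ distinct $k$-sums, each of which occurs $\phi_j(S,k)$ times in $Q_j$, contributing (by (1)) that many distinct subfiles of $W_i$. Summation over $k\in[N]$ yields the claimed total $\sum_{k=1}^{N}\binom{N-1}{k-1}\phi_j(S,k)$.

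\textbf{Main obstacle.} The delicate point is showing that the inductive step for~(2) is actually executable: at every stage the algorithm must (a) locate enough $(k-1)$-sums $Q\in Q_i$ free of $W_d$ subfiles to seed the new $k$-sums in \eqref{add1}, and (b) retain enough unused indices $a^i_t$ for $i\neq d$ to fill in the symmetry step \eqref{add2} without exhausting the permutation. Both reduce to the counting identity $\sum_{k=1}^{N}\binom{N-1}{k-1}\phi_j(S,k)$ summed over $j\in[S]$ equalling $S^{N-1}$, which matches the subpacketization; I would verify it using the closed forms $g(S,k)=\tfrac{S-1}{S}\bigl[(-1)^{k-1}+(S-1)^{k-2}\bigr]$ and $f(S,k)=\tfrac{1}{S}\bigl[(-1)^{k}+(S-1)^{k-1}\bigr]$ together with the binomial identity $\sum_{k=0}^{N-1}\binom{N-1}{k}(S-1)^{k}=S^{N-1}$.
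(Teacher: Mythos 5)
Your treatment of properties (2) and (3) follows the paper's own proof essentially step for step: the same induction on $k$ through the recurrences $g(S,k)=(S-1)f(S,k-1)$ and $f(S,k)=(S-2)f(S,k-1)+g(S,k-1)$ (the paper organizes this by splitting the $k$-subsets of $[N]$ into those containing $d$, handled inductively through \eqref{add1}, and those avoiding $d$, which \eqref{add2} inserts exactly $\phi_j(S,k)$ times by construction), and the same counting argument for (3). The executability concerns you raise at the end are legitimate, but they belong to the correctness proof in Appendix \ref{asec1} rather than to Lemma \ref{lemma1} itself; the identity $\sum_{j=1}^{S}\sum_{k=1}^{N}\binom{N-1}{k-1}\phi_j(S,k)=S^{N-1}$ is indeed how the paper verifies that the permutations are not exhausted.

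The gap is in property (1). You assert that in both \eqref{add1} and \eqref{add2} every subscript is advanced by the increment operator and conclude that no index $a^i_\ell$ is ever produced twice across all databases. That claim is false, and it had better be: the scheme works precisely by reusing side information, so the same subfile deliberately appears in the query sets of several databases (in Example \ref{ex2.1}, $b_1$ appears in all of DB1 through DB4). In \eqref{add1} only the appended subfile $W_d(a^d_{++t_d})$ is fresh; the $(k-1)$-sum $Q$ being imported consists of subfiles that already occurred in some other $Q_i$, $i\neq j$, with no increment applied. The real content of property (1) is therefore exactly the case your argument skips: why these \emph{reused} subfiles never collide inside a single $Q_j$. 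The paper closes this by observing that the \emph{if} condition selects only those $Q$ that were created by \eqref{add2} in $Q_i$; every element of such a $Q$ was obtained via the increment operator at the moment of its creation, so the eligible $(k-1)$-sums are pairwise disjoint as sets of subfiles, and the loop imports each of them into a given $Q_j$ at most once. Without an argument of this kind, property (1) is not established.
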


\begin{proof}
\begin{enumerate}
\item In Algorithm $1$, new subfiles of any file $W_i, i \in [N]$ is added to set $Q_j$ only in \eqref{add1} and \eqref{add2}. In \eqref{add2}, the prefix increment operator $++t$ is used to add new subfiles $\{ W_{i_1}(a^{i_1}_{++t_{i_1}}),$ $W_{i_2}(a^{i_2}_{++t_{i_2}}),$ $\ldots, W_{i_k}(a^{i_k}_{++t_{i_k}}) \}$, which ensures that no subfile is added twice in a set $Q_j, j \in [S]$. In \eqref{add1}, the set $\{ Q \cup W_d(a_{++t_d}^d) \}$ is added to $Q_j$, where $Q \in Q_i, i \neq j$. Since only those $Q$ which were added by \eqref{add2} in $Q_i$ will be selected by 'if' condition, therefore, the subfiles in $Q$ are new in $Q_j$. Also the subfile $W_d(a_{++t_d}^d)$ is added using prefix increment operator $++t$, this is also a new subfile in $Q_j$. 

\item Every $k$-sum corresponds to a subset of $[N]$ of size $k$. Let
$\{ U \subseteq [N] \ | \ |U|=k \} = X_1 \cup X_2 ,$
where $X_1=\{U \subseteq [N]  \ | \ |U|=k \ \& \ d \in U\}$ and $X_2=\{U \subseteq [N]\backslash \{d\} \ | \ |U|=k\}.$
From \eqref{add2}, it is clear that all the $k$-sums correspond to the sets in collection $X_2$ are occurring $\phi_j(S,k)$ times in $Q_j$. Now we only need to prove that it is true for the $k$-sums corresponding to the sets in collection $X_1$, which are being added to $Q_j$ in \eqref{add1}. We prove it by induction for $k \in [N]$.
The claim is true for $k=1$ as every $1$-sum appears $\phi_1(S,1)=1$ time in $Q_1$ and $\phi_j(S,1)=0$  time in $Q_j, 2 \leq j \leq S $.
 
Assume that the claim is true for $k=k'-1$, i.e., each $(k'-1)$-sums corresponds to the sets in collection $X_1$ are occurring $\phi_j(S,k'-1)$ times in $Q_j$ for all $j \in [S]$. Now we prove it for $k=k'$.

 In \eqref{add1}, a $k'$-sum $ \{Q \cup W_d(a^d_{++t_d})\}$ which corresponds to a set in collection $X_1$  is being added in $Q_j$, where $Q\in Q_i$ and $|Q|=k'-1$ for all $i \in [S]$ and $ i \neq j $. Since claim is true for all $k=k'-1$, every $(k'-1)$-sum appears exactly $\phi_j(S, k'-1)$ times in $Q_i$. Therefore, the total number of repititions of a $k'$-sum which corresponds to the sets in collection $X_1$ in $Q_j$ is 
$\eta_j=\sum_{i=1, i \neq j}^{S} \phi_i(S,k'-1).$
For $j=1$, 

\begin{align*}
\eta_1&=\sum_{i=2}^{S} \phi_i(S,k'-1) =(S-1) f(S,k'-1) = (S-1) \left[ \frac{1}{S} \left( (-1)^{k'-1} +(S-1)^{k'-2}\right) \right] \\
&=g(S,k)= \phi_1(S,k').
\end{align*}
For $j >1$,
\begin{align*}
\eta_j&=\sum_{i=1, i \neq j}^{S} \phi_i(S,k'-1) =g(S,k'-1)+(S-2)f(S,k'-1) \\
&= \frac{S-1}{S} \left[ (-1)^{k'-2} +(S-1)^{k'-3}\right] + \frac{S-2}{S} \left[ (-1)^{k'-1} +(N-1)^{k'-2}\right] \\
&= \frac{1}{S} \left[ (-1)^{k'} +(N-1)^{k'-1}\right] =f(S,k')=\phi_j(S,k').
\end{align*}

\item Since the total number of repititions of a $k$-sum in $Q_j, j \in [S]$ is $\phi_j(S,k)$ and there are exactly ${N-1 \choose k-1}$ number of $k$-sums which contain a subfile of $W_i$ for a fix $i\in [N]$, the total number of subfiles of a file $W_i$ appears in $Q_j$ is  $\sum_{k=1}^N {N-1 \choose k-1} \phi_j (S,k)$.

\end{enumerate}
\end{proof}

\subsubsection{Correctness of the scheme}
To prove that the scheme described in Algorithm $1$ is correct, we need to show that the user gets all the subfiles of the demanded file $W_d$, where $d \in [N]$. Initially, the subfile $W_d(a_1^d)$ is added to the query set $Q_1$ and any other subfile of the file $W_d$ is being added to $Q_j'$s in \eqref{add1} within the set $\{Q \cup W_d(a^d_{++t_d})\}$, where $Q \in Q_i$ and $i \neq j$. Since query $Q$ is already asked to database $i$, user can easily decode the subfile $ W_d(a^d_{++t_d})$. That means every subfile of $W_d$ that appeared in any of the query set $Q_j, j\in [S]$ can be decoded by the user. From Lemma \ref{lemma1} (a) and (c), we find that the total number of subfiles of $W_d$ appeared in $\cup_{j=1}^{S}Q_j$ is
\begin{align*}
\eta&=\sum_{j=1}^S \sum_{k=1}^N {N-1 \choose k-1} \phi_j(S,k)\\
&=\sum_{k=1}^N {N-1 \choose k-1} g(S,k) + (S-1)\sum_{k=1}^N {N-1 \choose k-1} f(S,k) \\
&= \sum_{k=1}^N {N-1 \choose k-1} \frac{S-1}{S}  \left[ (-1)^{k-1} + (S-1)^{k-2} + (-1)^{k} + (S-1)^{k-1} \right] \\
&= \sum_{k=1}^N {N-1 \choose k-1} (S-1)^{k-1} \\
&= \sum_{k'=0}^{N-1} {N-1 \choose k'} (S-1)^{k'} = (1+S-1)^{N-1} = S^{N-1}.
\end{align*}
Hence all the subfiles of $W_d$ will appear in one of the query sets $Q_j, j \in [S]$ and all are decodable.

\subsubsection{Privacy}
In Algorithm $1$, whenever a subfile $W_i(t)$, where $ i \in [N]$ and $t \in [S^{N-1}]$, is added to the set $Q_j$ for some $j\in [S]$, it is added using a random permutation $a^i$. Form Lemma \ref{lemma1} (a) and (b), it is clear that query sets are symmetric along all the files $W_i, i\in [N]$ as for any $j \in [S]$, no subfile occurs more than once in $Q_j$ and every possible $k$-sum among the files $W_1, W_2, \ldots, W_N$ appears exactly $\phi_j(S,k)$ times in $Q_j$ which is clearly independent of $i$.

\subsubsection{Rate}

Suppose the size of each file $W_i, i\in [N]$ is $L$ bits, then the total number of downloaded bits from all the databases is $\frac{L}{S^{N-1}} |\cup_{j-1}^S Q_j|$, as the answer to each query has the size of a subfile. The total number of required bits is $L$. Therefore, the rate
$$R=\frac{1}{S^{N-1}} \left| \bigcup_{j-1}^S Q_j \right|.$$
The total number of downloaded subfiles is
\begin{align*}
\left| \bigcup_{j-1}^S Q_j \right| &= \sum_{j=1}^S |Q_j| = \sum_{j=1}^S \sum_{k=1}^N {N \choose k} \phi_j(S,k)  = \sum_{k=1}^N {N \choose k} \left[ g(S,k) + (S-1) f(S,k) \right] \\
&= \sum_{k=1}^N {N \choose k} \frac{S-1}{S}  \left[ (-1)^{k-1}  + (S-1)^{k-2} + (-1)^{k} + (S-1)^{k-1} \right] \\
&= \sum_{k=1}^N {N \choose k} \frac{S-1}{S}  \left[S(S-1)^{k-2} \right] 
= \frac{1}{S-1} \sum_{k=1}^N {N \choose k} (S-1)^{k} \\
&= \frac{1}{S-1} \left[ \sum_{k=0}^N {N \choose k} (S-1)^{k} - {N \choose 0} (S-1)^0 \right] \\
&=  \frac{1}{S-1} \left[ (1+S-1)^N - 1 \right] = \frac{S^N-1}{S-1} = 1+S+S^2+\cdots + S^{N-1}.
\end{align*}
Therefore, the rate is $R=\left(1+\frac{1}{S}+\frac{1}{S^2}+ \cdots + \frac{1}{S^{N-1}} \right) ,$
which is optimal.

\section{Proof of Lemma \ref{lemma4.1}} \label{asec2}
From the definition of functions $\phi_s(S,k)$ and $\psi_s(S,k)$, we have
\begin{enumerate}
\item For $k=1$ and $2 \leq s \leq S$,
$\phi_s(S,k)=f(S,1)=\frac{1}{S} \left( (-1)^1 + (S-1)^0 \right) =0.$
Therefore, $\psi_s(S,1)=0$ for all  $2 \leq s \leq S$. .
\item For $k=2$ and $s=1$,
$\phi_s(S,k)=g( S,2)=\frac{S-1}{S} \left( (-1)^1 + (S-1)^0 \right)=0.$
Therefore, $\psi_1(S,2)=0$.
\item For $k=N$, $\psi_s(S,k)=(N-1)\phi_s(S,k)$ is an integer as $\phi_s(S,k)$ is an integer.
\end{enumerate}
We know that $\lceil x \rceil \leq x+1$ and if $x$ is an integer then $\lceil x \rceil = x$. Hence from the observations 1), 2), and 3) above, we have  
\begin{align*}
q & = \sum_{k=1}^{N} \sum_{s=1}^{S} {N \choose k} \left \lceil \frac{k(N-1)}{N} \phi_s(S,k) \right \rceil \\
&<  \sum_{k=1}^{N} \sum_{s=1}^{S} {N \choose k} \left ( \frac{k(N-1)}{N} \phi_s(S,k) +1 \right ) - \sum_{s=2}^S {N \choose 1} - {N \choose 2} - \sum_{s=1}^S {N \choose N} \\
&=  \sum_{k=1}^{N} \sum_{s=1}^{S}   \frac{k(N-1)}{N} {N \choose k} \phi_s(S,k)   + \sum_{k=1}^{N} \sum_{s=1}^{S} {N \choose k}  - N(S-1) - {N \choose 2} - S \\
&= (N-1) \sum_{k=1}^{N} \sum_{s=1}^{S}  {N-1 \choose{k-1}} \phi_s(S,k)  + S \sum_{k=1}^{N} {N \choose k}  - N(S-1) - {N \choose 2} - S \\
&= (N-1) S^{N-1} + S (2^N-1)  - N(S-1) - {N \choose 2} - S .
\end{align*}
Therefore,
\begin{align}\label{q3}
NS^{N-1}-q &>  
 S( S^{N-2} -  2^N+2)  + N(S-1) + {N \choose 2}.
\end{align}
\begin{itemize}
\item For $S \geq 4$, we have
\begin{align}\label{q4}
NS^{N-1}-q &>  4( 4^{N-2} -  2^N+2)  + 3N + {N \choose 2} =4(2^{2N-4}-2^N+2)+ 3N + {N \choose 2}.
\end{align}
For $N \geq 4$, we have $2N-4 \geq N$. Using this inequality in \eqref{q4}, we have
$
NS^{N-1}-q >  4(2^N-2^N+2)+ 3N + {N \choose 2} = 8+3N+{N \choose 2} >0.
$
For $N=3$, using \eqref{q4}, we have
$NS^{N-1}-q > 4(2^2-2^3+2)+9+3 =4 >0.$

\item For $S=3$,  we have
\begin{equation}\label{q7}
NS^{N-1}-q > 3(3^{N-2}-2^N+2) +2N+ {N \choose 2}.
\end{equation}
If $N>5$, then we have $3^{N-2}>2^N$, and for $N=5$, it is easy to check that the right-hand side of \eqref{q7} is a positive number. Hence, for $N\geq 5$, we have $NS^{N-1}-q >0$. For $N=3,4$, the value of $NS^{N-1}-q $ is positive, which satisfies the claim.
\end{itemize}

Therefore, for all $S\geq 3$ and $N \geq 3$, we have $NS^{N-1}-q > 0$. Now we separately prove it for the remaining cases $S=2$ and $N=2$ which will cover all the cases when $S \geq 2$ and $N \geq 2$.


\subsection{For $S=2$:} By the definition of $g(S,k)$ and $f(S,k)$, we have
$$
g(2,k) =
 \begin{cases} 1 & \text{if $k$ is odd} \\ 0 & \text{if $k$ is even} \end{cases}
\ \
\text{and}
\ \
f(2,k) = 
\begin{cases} 0 & \text{if $k$ is odd} \\ 1 & \text{if $k$ is even} \end{cases}.
$$
Therefore, we have $G(2,k)=0$ if $g(2,k)=0$ and if $g(2,k)=1$ then
$
G(2,k)=\left \lceil \frac{k(N-1)}{N} \right \rceil = k -\left \lceil \frac{k}{N} \right \rceil +1 = k$ for $k <N$
and $G(2,k) =N-1$ for $k=N$. Therefore, we have
\begin{equation}\label{q5}
G(2,k)= \begin{cases} 0 & \text{if $k$ is even} \\ k & \text{if $k$ is odd \& $k<N$} \\ N-1 & \text{if $k$ is odd \& $k=N$} \end{cases}.
\end{equation}
Similarly,
\begin{equation}\label{q6}
F(2,k)= \begin{cases} 0 & \text{if $k$ is odd} \\ k & \text{if $k$ is even \& $k<N$} \\ N-1 & \text{if $k$ is even \& $k=N$} \end{cases}.
\end{equation}
If $N$ is even, then the value of $q$ can be calculated as follows.
\begin{align*}
q &= \sum_{k=1}^N {N \choose k} [G(S, k) + (S-1) F(S,k)] \\
&=  \sum_{k=1}^N {N \choose k} G(S, k) + (S-1)  \sum_{k=1}^N {N \choose k} F(S,k) \\
&= \left[{N \choose 1} + 3 {N \choose 3} + 5 {N \choose 5} + \ldots  + {N \choose{N-1}}  (N-1) \right] + (2-1) \left[2 {N \choose 2} \right. \\
& \qquad  \qquad \left. + 4 {N \choose 4}  + \ldots  + {N \choose N}(N-1) \right] \\
& = \left[{N \choose 1} + 2 {N \choose 2} + 3 {N \choose 3} + \ldots + {N \choose{N-1}}   (N-1)  + {N \choose N} (N-1) \right]\\
&=\sum_{k=1}^N k {N \choose k} - {N \choose N} = N2^{N-1}-1.
\end{align*}
Similarly, for the case when $N$ is odd, we have $q=N2^{N-1}-1$. Therefore, we have 
$NS^{N-1}-q=1>0.$

\subsection{For $N=2$:} By the definitions of $g(S,k)$ and $f(S,k)$, we have
$$g(S,k) = \begin{cases} 1 & \text{for} \ k=1 \\ 0 & \text{for}\ k=2 \end{cases} \ \
\text{and} 
\ \  f(S,k) = \begin{cases} 0 & \text{for} \ k=1 \\ 1 & \text{for}\ k=2 \end{cases}.$$
Further, we have $G(S,1)=1, G(S,2)=0, F(S,1)=0, F(S,2)=1$, and 
$$q=\sum_{k=1}^2 {2\choose k} (G(S,k) + (S-1) F(S,k)) = S+1.$$
Therefore, $NS^{N-1}-q=2S-(S+1)=S-1>0$.


\ifCLASSOPTIONcaptionsoff
  \newpage
\fi

\end{document}